\newcommand{\nop}[1]{}
\newcommand{\tabincell}[2]{\begin{tabular}{@{}#1@{}}#2\end{tabular}}
\newtheorem{definition}{Definition}[section]
\newtheorem{theorem}{Theorem}[section]
\newtheorem{example}{Example}
\begin{document}
%
\conferenceinfo{WOODSTOCK}{'97 El Paso, Texas USA}

\title{On The Marriage of SPARQL and Keywords}
%
%
%
%
%

\numberofauthors{3} 
%

\nop{
\author{
%
%
\alignauthor
Peng Peng\\
       \affaddr{Peking University}\\
       \affaddr{Beijing, China}\\
       \email{pku09pp@pku.edu.cn}
\alignauthor
Lei Zou\\
       \affaddr{Peking University}\\
       \affaddr{Beijing, China}\\
       \email{zoulei@pku.edu.cn}
\alignauthor Dongyan Zhao\\
       \affaddr{Peking University}\\
       \affaddr{Beijing, China}\\
       \email{zhaodongyan@pku.edu.cn}
}
}

\author{%
{{Peng Peng}, Lei Zou{}, Dongyan Zhao{}}%
\vspace{1.6mm}\\
\fontsize{10}{10}\selectfont\itshape Peking University,
China;\\
\fontsize{9}{9}\selectfont\ttfamily\upshape $\{$
pku09pp,zoulei,zhaodongyan$\}$@pku.edu.cn
\\}

\maketitle

\begin{abstract}
Although SPARQL has been the predominant query language over RDF graphs, some query intentions cannot be well captured by only using SPARQL syntax. On the other hand, the keyword search enjoys widespread usage because of its intuitive way of specifying information needs but suffers from the problem of low precision. To maximize the advantages of both SPARQL and keyword search, we introduce a novel paradigm that combines both of them and propose a hybrid query (called an SK query) that integrates SPARQL and keyword search. In order to answer SK queries efficiently, a structural index is devised, based on a novel integrated query algorithm is proposed. We evaluate our method in large real RDF graphs and experiments demonstrate both effectiveness and efficiency of our method.
\end{abstract}

\section{Introduction}
\label{sec:introduction}

As more and more knowledge bases become available, the question of how end users can access this body of knowledge
becomes of crucial importance. As the de facto standard of a knowledge base, RDF (Resource Description Framework) repository is a
collection of triples, denoted as $\langle$ subject, predicate, object $\rangle$. An RDF repository can be represented as a graph, where subjects and objects are vertices connected by labeled edges (i.e., predicates).  Figure \ref{fig:exampleRDF} shows an example RDF dataset and the corresponding RDF graph, which is a part of a well-known knowledge base Yago \cite{DBLP:yago}. All subjects and objects correspond to vertices and predicates correspond to edge labels. The numbers besides the vertices are IDs, and they are introduced for the easy presentation.

\begin{figure*}%
\begin{minipage}{0.48\textwidth}
   \centering
   \subfigure[][{\scriptsize Example RDF Triples}]{%
      \includegraphics[scale=0.27]{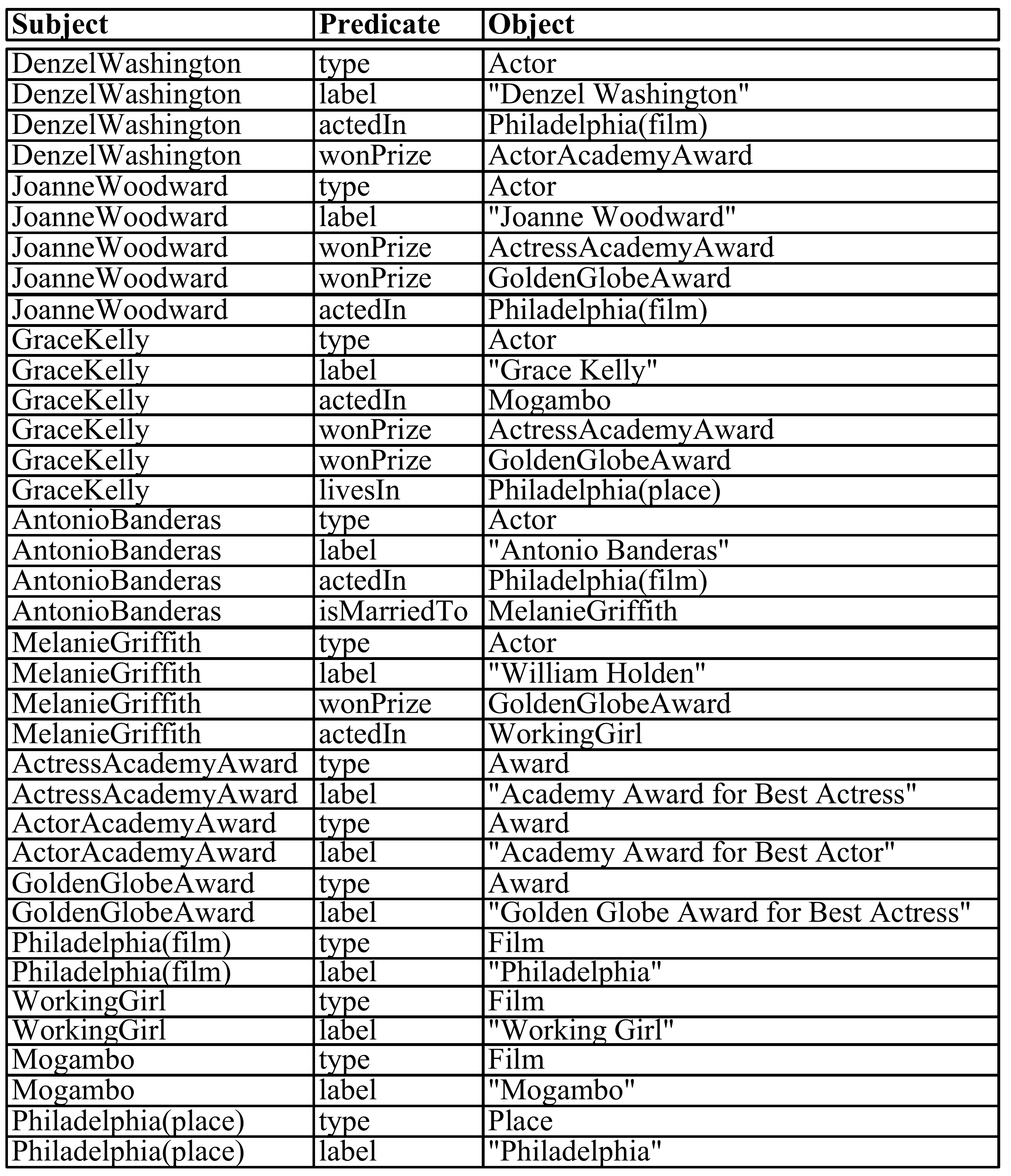}
       \label{fig:exampleRDFTriples}%
       }%
       \end{minipage}
        \begin{minipage}{0.48\textwidth}
       \subfigure[][{\scriptsize Example RDF Graph}]{%
      \includegraphics[scale=0.23]{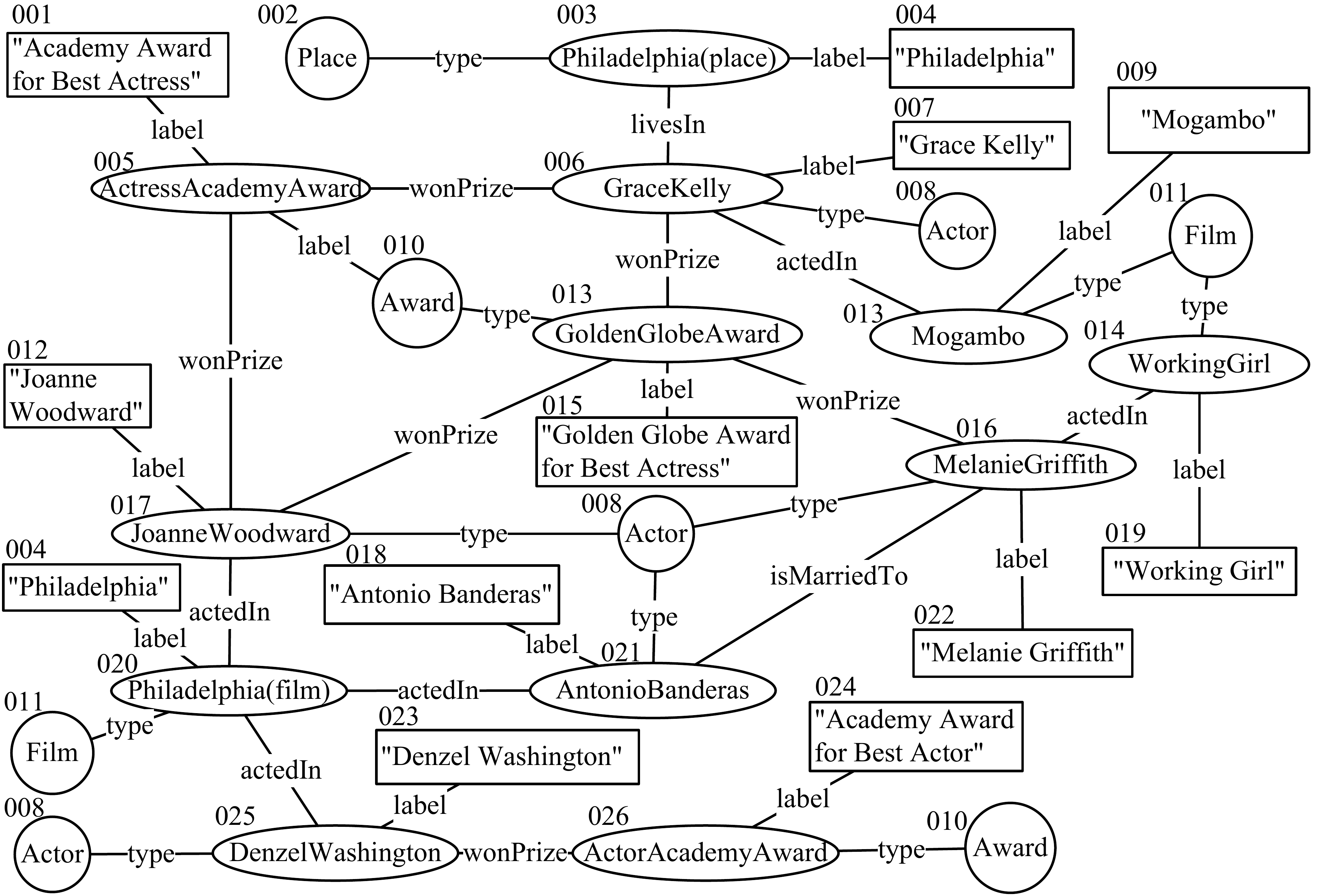}
       \label{fig:exampleRDFGraph}%
       }%
       \\
       \subfigure[][{\scriptsize Example SPARQL-Keyword Query}]{%
      \includegraphics[scale=0.28]{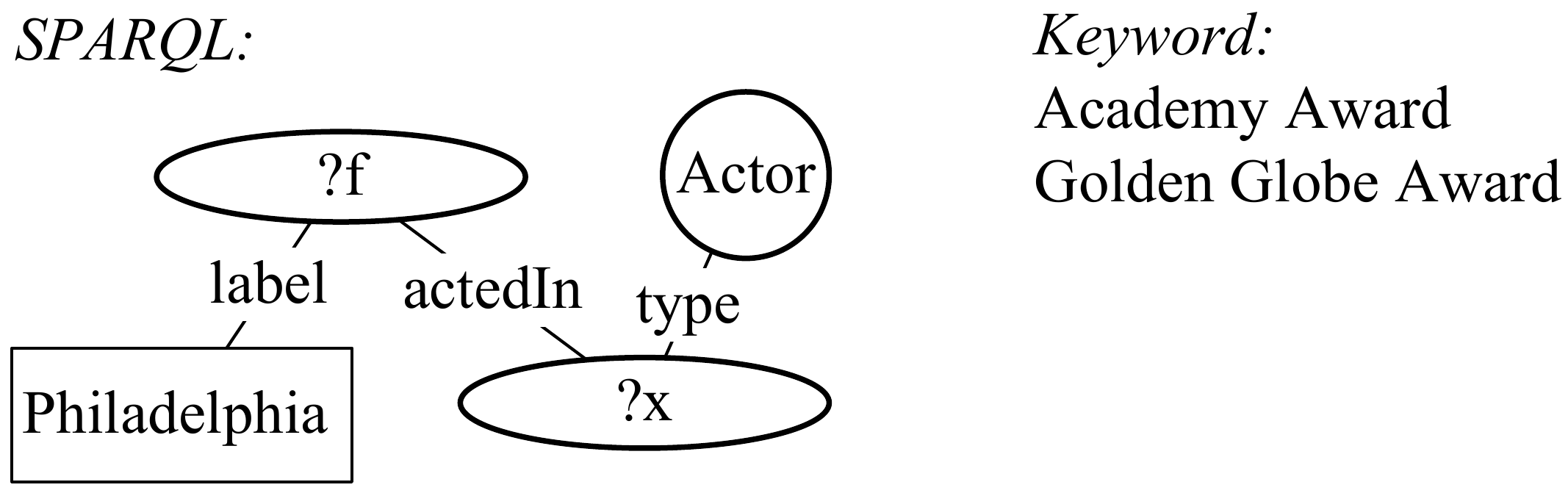}
       \label{fig:ExampleSPARQLKeyword}%
       }%
       \end{minipage}
       \vspace{-0.3in}
\caption{ RDF Examples}%
 \label{fig:exampleRDF}
\end{figure*}

As we know, SPARQL query language is a standard way to access RDF data and is based on the subgraph (homomorphism) match semantic \cite{DBLP:journals/tods/PerezAG09}. Figure \ref{fig:sparqlexample_1} shows an example of SPARQL query and its corresponding query graph graph is shown in Figure \ref{fig:ExampleSPARQLKeyword}.  The query semantics of the example SPARQL is ``finding all actors starring in film \emph{Philadelphia}'',  To enable to use SPARQL, users should have full knowledge of the whole RDF schema. For example, users should know that predicate ``actedIn'' means ``starring in'' and the Philadelphia film's URI is ``Philadelphia(film)''. In real applications, it may not be practical to have full knowledge about the whole schema; thus, it may not be possible to specify exact query criteria. The following example illustrates the challenges.

\begin{example} \label{example:1} Find all actors starring in film \emph{Philadelphia}, who are related to ``Academy Award'' and ``Golden Globe Award''. Assume that we do not know the exact URIs corresponding to ``Academy Award'' and ``Golden Globe Award''. Furthermore, there is no precise predicate corresponding to ``related to''.
\end{example}

\vspace{-0.3in}
\begin{figure}[h]%
   \centering
   \subfigure[][{\scriptsize $Q_1$}]{%
      \includegraphics[scale=0.3]{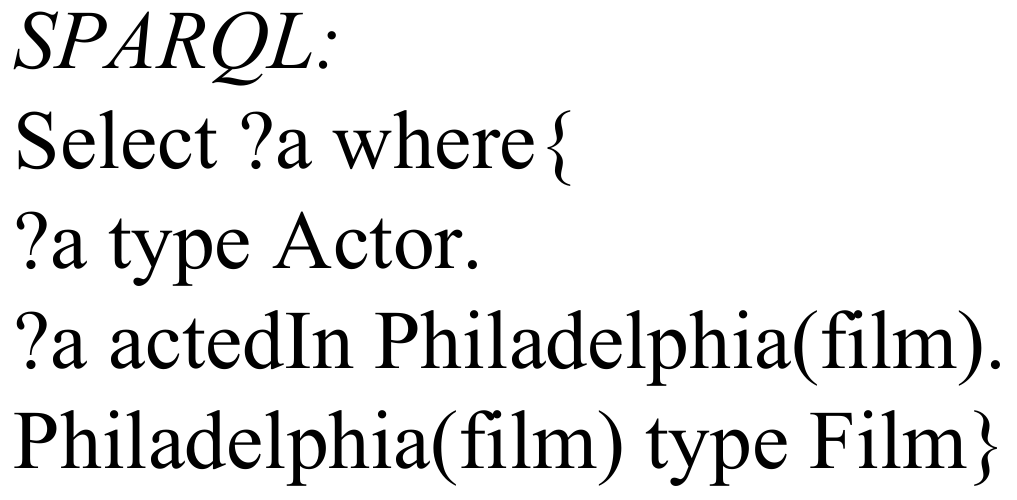}
       \label{fig:sparqlexample_1}%
       }%
   \subfigure[][{\scriptsize $Q_2$}]{%
      \includegraphics[scale=0.3]{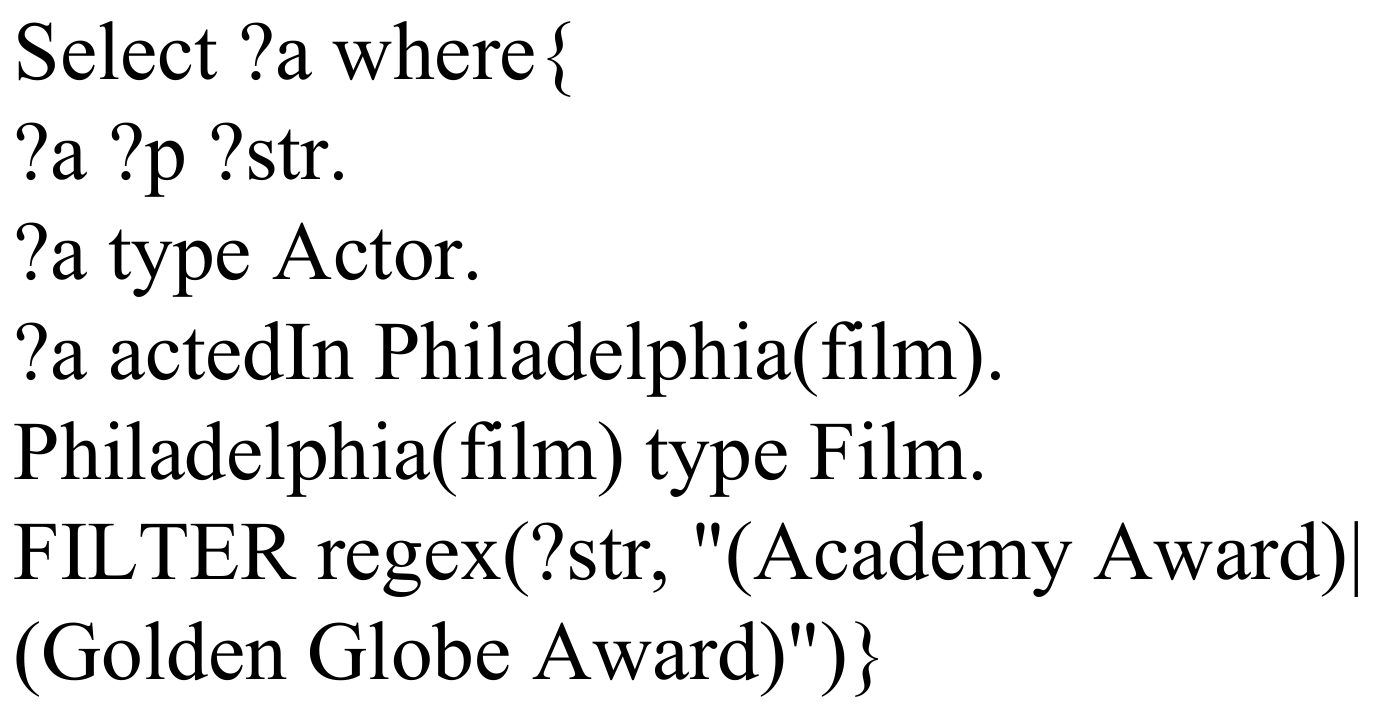}
       \label{fig:sparqlexample_2}%
       }%
        \vspace{-0.25in}
 \caption{ Example SPARQL Queries}%
 \label{fig:sparqlexample}
\end{figure}
\vspace{-0.1in}

There are two issues in this example. First, since we do not know the URIs of ``Academy Award'' and ``Golden Globe Award'', we should provide a keyword search paradigm that maps the keywords to the corresponding entities or classes in RDF graphs. Existing SPARQL syntax only supports the regular expression, as shown in Figure \ref{fig:sparqlexample}. More typographic or linguistic distances, such as string edit distance \cite{DBLP:conf/sigmod/DengLF14} and google similarity distance \cite{DBLP:GoogleDistance}, are desirable.

The second issue is that there is no precise predicate corresponding to ``related to''. A possible solution is to use ``unknown'' predicate (i.e., a variable at the predicate position), but, it only finds one-hop relations. Figure \ref{fig:sparqlexample_2} shows a SPARQL query with unknown predicate and the regular expression FILTER. However, it fails to finding the multiple-hop relations, which may also be informative to users. For example, Antonio Banderas, an actor starring in Philadelphia film, whose wife won ``Golden Globe Award''.  This is also a possible interesting result to users, but, this is a two-hop relation between Antonio Banderas and ``Golden Globe Award''.

In contrast, keyword search \cite{DBLP:BANKS,DBLP:Bidirectional,DBLP:BLINKS,DBLP:DPBF,DBLP:star,DBLP:rcliques} on graphs provides an intuitive way of specifying information needs. For example, we input two keywords ``Joanne Woodward'' and ``Golden Globe Award'' to discover unbounded relations (i.e., the paths in RDF graphs) between them. However, keyword search may return a larger number of non-informative search answers to users.

In fact, users' query intensions cannot be well modeled using a single query type in many real-life applications. Hence, a hybrid search capability is desired. In this paper, we propose an integrated query formulation (called a \emph{S}PARQL-\emph{K}eyword query, shorted as SK query) and the solution framework by combining advantages of SPARQL and keyword search. Generally speaking, the results of SK query is the $k$ SPARQL matches that are closet to all keywords in RDF graph $G$, where $k$ is a parameter given by users. The formal definition of SK query is given in Definition \ref{def:query}.

Let us recall Example \ref{example:1} again. We issue the following SK query $\langle Q,q\rangle$. The SPARQL query graph $Q$ is given in Figure \ref{fig:ExampleSPARQLKeyword}, while the keywords are $q=\{$Academy Award, Golden Globe Award$\}$. Figure \ref{fig:yagoresultofsk} shows three different results. First, there are three different subgraph matches of query $Q$, i.e., $M_1$, $M_2$ and $M_3$. Then, the keywords are matched in different literal vertices, i.e., $001$, $015$ and $026$. The distance between a subgraph match $M$ and a keyword in $q$ is the shortest distance between $M$ and one vertex containing keywords. We find that $M_1$ is the closest to the two keywords. It says ``Joanne Woodward starring in Philadelphia film won both Academy Award and Golden Globe Award''. Obviously, this is an informative answer to the query in Example \ref{example:1}.

\begin{figure}[h]
 \vspace{-0.1in}
\begin{center}
    \includegraphics[scale=0.21]{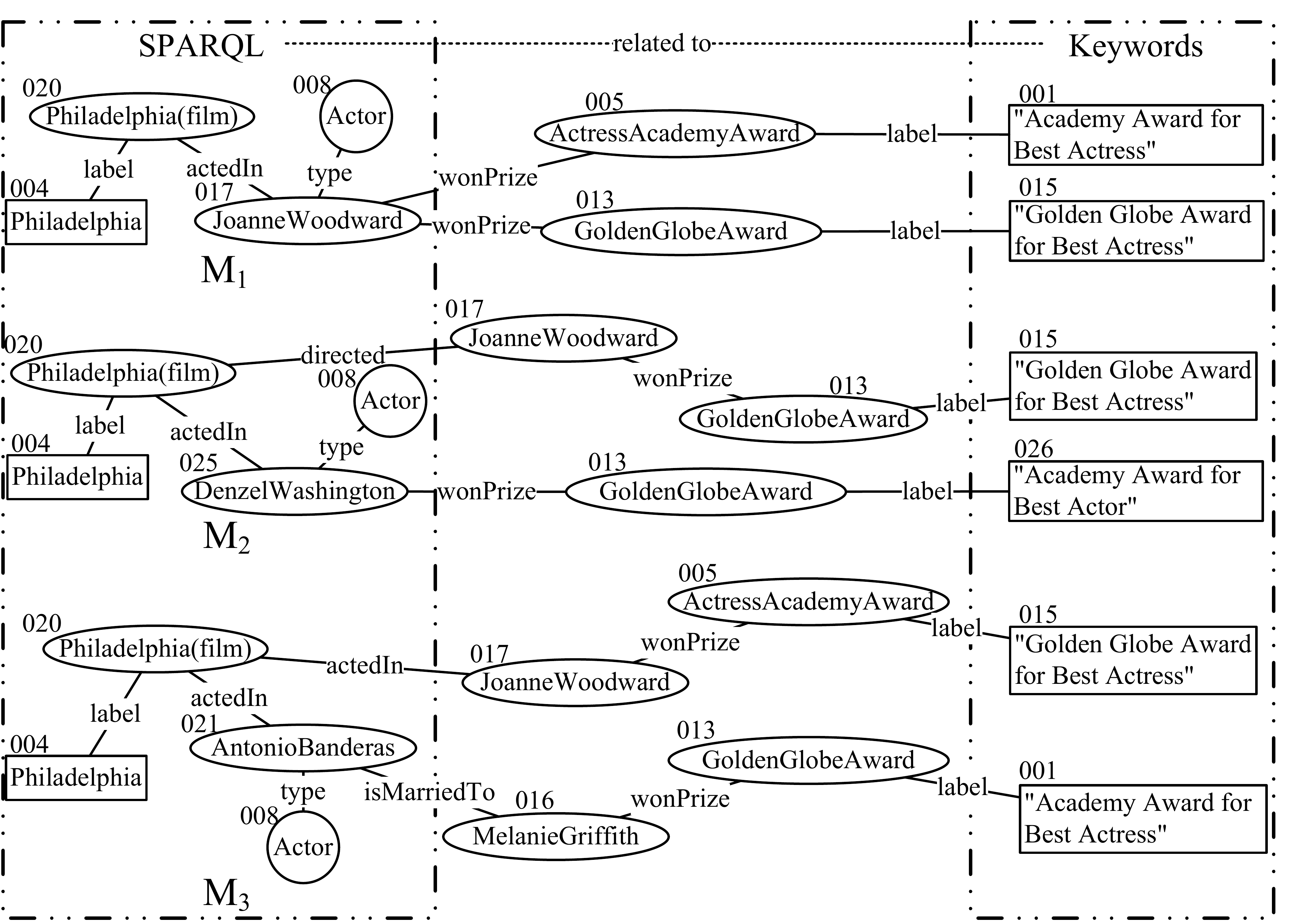}
    \vspace{-0.25in}
   \caption{ SK Query Results}
   \vspace{-0.25in}
   \label{fig:yagoresultofsk}
\end{center}
\end{figure}
\vspace{-0.1in}

In the above analysis, we assume that the relation strength depends on the path length, i.e., the number of hops. Actually, different predicates should have different weights to evaluate the relation strength. For example, there are two two-hops paths from 021 (AntonioBanderas) to 017 (JoanneWoodward). The first one is through 020 (Philadelphia(film)), while the second is through 008 (Actor). It is obvious that two people cooperating in the same film is more meaningful than both of them being actors, so the two-hops path through 020 has more relation strengths than the two-hops path through 008. Therefore, following the intuition of TF-IDF for measuring the word importance in a corpus, we propose ``predicate salience'' (see Section \ref{sec:background}) to evaluate relation strengths.

Another challenge of this problem is the search efficiency. A naive ``\emph{exhaustive-computing}'' strategy works as follows: we first find all subgraph matches of $Q$ (in RDF graph $G$) by existing techniques; then, we compute the shortest path distances between these subgraph matches and the vertices containing keywords on the fly; finally, the matches with shortest distances to keywords are returned as answers. Obviously, this is an inefficient solution. Given a SPARQL query $Q$, there may exist some matches of $Q$ that are far from the keywords in RDF graphs. These matches cannot contribute to the final results. Therefore, it is unnecessary to identify all subgraph matches in RDF graph. Instead of the exhaustive computing, we only find matches of SPARQL query $Q$ progressively and design a lower bound that stops the search process as early as possible. Moreover, we propose a star index to enable the structural pruning.

In summary, we made the following contributions in this paper.

\begin{enumerate}
\item We propose a new query paradigm over RDF data combining keywords and SPARQL(called an SK query), and design a novel solution for this problem.

\item We design an index to speed up SK query processing. We propose a frequent star pattern-based index to reduce the search space.

\item We evaluate the effectiveness and efficiency of our method in real large RDF graphs and conclude that our methods are much better than comparative models in both effectiveness (in terms of NDCG$@k$) and query response time.
\end{enumerate}

The remainder of this paper is organized as follows: Section \ref{sec:background} defines the preliminary concepts. Section \ref{sec:overview} gives an overview of our approach. We introduce a structural index to efficiently find the candidates of variables in SPARQL queries in Section \ref{sec:index}. We discuss how to compute the results of SK queries in Section \ref{sec:topkcomputation}. Experimental results are presented in Section \ref{sec:experiments}. Related works and the final conclusion are drawn in Section \ref{sec:related} and \ref{sec:conclusions}, respectively.

\vspace{-0.075in}
\section{Background}
\label{sec:background}
In this section, we introduce the fundamental definitions used in this paper.

\vspace{-0.075in}
\subsection{Preliminaries}
\label{sec:Preliminaries}

An RDF dataset consists of a number of triples, which is corresponding to an RDF graph. The SK query is to find $k$ SPARQL matches that are top-k nearest with regard to all keywords.
\nop{
Table \ref{table:notations} lists commonly used notations in this paper.

\begin{table}[t]

\centering
\begin{tabular}{|p{3.35cm}||p{4.8cm}|}
\hline
   \textbf{Notations} &  \textbf{Meanings} \\ \hline \hline
  $G=(V(G),E(G),L)$ &  RDF Graph \\
  \hline
  $\langle Q,q=\{w_1,w_2,...,w_n\}\rangle$    & A \emph{S}PARQL $\&$ \emph{K}eyword query, where $Q$ is the SPARQL and $q$ is the list of keywords.\\
  \hline
  $r=\langle M,\{v_1,v_2,...,v_n\}\rangle$    & A result of the SK query $\langle Q,q \rangle$, where $M$ is a match of $Q$ in $G$ and $v_i$ is a vertex (in $G$) containing keyword $w_i$ in $q$.  \\
  \hline
  $Cost_{content}(r)$    & The content cost of a result $r$ \\
    \hline
  $ Cost_{structure} (r)$    & The structure cost of a result $r$ \\
  \hline
  $d(M,v)$    & The distance between a match $M$ and a vertex $v$   \\
  \hline
  $d(v,w_i)$    & The distance between a vertex $v$ and a keyword $w_i$   \\
 \hline
 $PQ_i$    & A priority queue for keyword $w_i$. \\
  \hline
 $(v, p,|p|)$    &  A basic element in $PQ$, where $v$ is a vertex, $p$ is a path between $v$ and some vertices containing keywords and $|p|$ denotes the path distance. \\
 \nop{
  \hline
 $t$    & A length-m vector, where the $j$-th dimension in $t$ is a vertex $v$ corresponding to variable $u_i$ in SPARQL query $Q$. \\
 }
 \hline
 $RS_i$    & A result set for keyword $w_i$. \\
 \hline
 $d[v]$    & A length-n vector for vertex $v$. If $v\in RS_i$, $d[v][i]=d(v,w_i)$; otherwise, $d[v][i]=null$. \\
 \hline
  \end{tabular}
\label{table:notations}
\caption{Notations}
\end{table}
}

\begin{definition}
\label{def:rdfdatagraph}
An RDF data graph $G$ is denoted as $\langle V(G), E(G),$ $L\rangle$, where (1) $V(G)=V_L\cup V_E \cup V_C$ is the set of vertices in RDF graph $G$, where $V_L$, $V_E$ and $V_C$ denote literal vertices, entity vertices and class vertices, respectively; (2) $E(G)$ is the set of edges in $G$; and (3) $L$ is a finite set of edge labels, i.e. predicates.
\end{definition}

\begin{definition}\label{def:query}

A \emph{SK (\emph{S}PARQL \& \emph{K}eyword)} query is a pair $\langle Q, q\rangle$, where $Q$ is a SPARQL query graph and $q$ is a set of keywords $\{w_1, w_2, ... ,w_n\}$.

Given an SK query $\langle Q, q\rangle$, the \emph{result} of $\langle Q, q\rangle$ in a data graph $G$ is a pair $\langle M, \{v_1, v_2, ... ,v_n\}\rangle$, where $M$ is a subgraph match of $Q$ in $G$ and $v_i$ ($i=1,...,n$) is a literal vertex (in $G$) containing keyword $w_i$.
\end{definition}

\nop{In practice, the SK query requires returning not only the distances but also the paths between SPARQL matches and keywords. We only need simply extend the Definition of SK query result and also return the paths connected $M$ with $v_i$. Then, it is well suited to the problem of finding ``possible relationships'' between two entities in RDF graph. Furthermore, it is also easy to extend our method for returning paths. During the backward search from vertices containing keywords, we only need specify the precursor vertex of each vertex.}

Given an SK query $\langle Q,q=\{w_1,...,w_n\}\rangle$, the cost of a result $r=\langle M, \{v_1, v_2, ... ,v_n\}\rangle$ contains two parts. The first part is the content cost and the second part is the structure cost.

\begin{definition} Given a result $r=\langle M, \{v_1, v_2, ... ,v_n\}\rangle$, the \emph{cost} of $r$ is defined as follows:
\[
Cost(r) = Cost_{content} (r) + Cost_{structure} (r)
\]
where $Cost_{content} (r)$ is the content cost of $r$ (defined in Definition \ref{def:answerscontent}) and $Cost_{structure}(r)$ is the structure cost of $r$ (defined in Definition \ref{def:structure}).
\end{definition}

\begin{definition}\label{def:answerscontent}
Given a result $r=\langle M, \{v_1, v_2, ... ,v_n\}\rangle$, the \emph{content cost} of $r=\langle M, \{v_1, v_2, ... ,v_n\}\rangle$ is defined as follows:
\[
Cost_{content} (r) = \sum\nolimits_{i = 1}^{i = n} C(v_i ,w_i )
\]
where $C(v_i,w_i)$ is the matching cost between $v_i$ and keyword $w_i$.
\end{definition}

Any typographic or linguistic distances, such as string edit distances \cite{DBLP:journals/jacm/WagnerF74}
and google similarity distance \cite{TKDE07:GoogleSimilarity}, can be used to measure $C(v_i,w_i)$.

In applications, users are more interested in some variables (in SPARQL query $Q$) than the constants in $Q$. Let us recall Example \ref{example:1}. The distance between keywords and the matching vertices with regard to variable ``?a'' is more interesting to measure the relationship strength. Therefore, to evaluate the structure cost (in Definition \ref{def:structure}), we only consider the matching vertices with regard to variables in SPARQL query $Q$.

\nop{
The distances between keywords and constants $\langle$Actor$\rangle$ are the same in the three matches $M_1$, $M_2$ and $M_3$, as shown in Figure \ref{fig:yagoexmapleofgraph}.
}

\begin{definition}\label{def:structure}

Given a result $\langle M, \{v_1, v_2, ... ,v_n\}\rangle$ for an SK query $\langle Q,q\rangle$, the distance between match $M$ and vertex $v_i$ ($i=1,...,n$) is defined as follows.
\[
d(M, v_i)=MIN_{v \in M}\{d(v,v_i)\}
\]

where $v$ is a matching vertex in $M$ with regard to a variable in SPARQL query $Q$ and $d(v, v_i)$ is the shortest path distance between $v$ and $v_i$ in RDF graph $G$.

Then, the \emph{structure cost} of a result $r=\langle M, \{v_1, v_2, ... ,v_n\}\rangle$ is defined as follows.
\[
Cost_{structure} (r) = \sum\nolimits_{i = 1}^{i = n} d (M,v_i )
\]
\end{definition}

\nop{
For simplicity of presentation, we only consider structure cost and ignore content cost in the following discussion until Section \ref{sec:onlineprocessing}. In Section \ref{sec:onlineprocessing}, we discuss how to handle both content and structure costs in our proposed algorithm.
}

(\textbf{Problem Definition}) \emph{Given an SK query $\langle Q,q\rangle$ and a parameter $k$, our problem is to find $k$ \emph{results} (Definition \ref{def:query}), which have the $k$-smallest costs.}

\subsection{Predicate Salience}\label{sec:predicatesalience}

In this paper, we use ``shortest path distance'' to evaluate the relation strength. However, the naive definition of the shortest path distance suffers from a critical problem: all predicates, i.e., edge labels, are considered equally important when it is used to measure the relationship strength between entities. In fact some predicates have little or no discriminating power in determining relevance. For example, predicates like ''type'' and ``label'' are so common that each entity is incident to a class vertex through an edge of predicate ``type''. This tends to incorrectly emphasize paths which contain these common predicates more frequently, without giving enough weight to the paths of more meaningful predicates (like ``actedIn'' and ``isMarriedTo''). The predicates like ''type'' and ``label'' are not good predicates to distinguish relevant and non-relevant vertices, unlike the less common predicates ``actedIn'' and ``isMarriedTo''.

Hence, we should introduce a mechanism for attenuating the effect of predicates that occur too frequently in the RDF graph to be meaningful for relevance determination. Learning from the concept of document frequency, we first find out the set of vertices in the RDF graph incident to a predicate $p$, which is denoted as $V(p)$. Then, we divide the size of $V(p)$ by the total number of vertices. We name the measure as the \emph{predicate salience} of predicate $p$ and give the formal definition as follows:
\[
ps(p) =  \frac{{|V(p)|}}{{|V(G)|}},
\]
Thus the predicate salience of a rare predicate is low, whereas the predicate salience of a frequent predicate is likely to be high, which means that rare predicates have less cost than frequent predicates.

Let us consider RDF graph in Figure \ref{fig:exampleRDFGraph}. The predicate salience values of all predicates are given in Table \ref{table:yagopredicateweights}.As shown in Table \ref{table:yagopredicateweights}, predicate ``actedIn'' is more important than ``type'' in measuring the relation strength, while the former's predicate salience is 0.296 and the latter's is 0.593.

\vspace{-0.1in}
\begin{table}[h]
\small
\centering
\begin{tabular}{|c|c|}
\hline
Predicate&Predicate Salience\\
  \hline
  \hline
  actedIn&$0.296$ \\
  \hline
  isMarriedTo&$0.074$\\
  \hline
  label&$0.852$\\
  \hline
  livesIn&$0.074$\\
  \hline
  type&$0.593$\\
  \hline
  wonPrize&$0.259$\\
  \hline
  \end{tabular}
  \vspace{-0.2in}
\caption{Weights of Predicates in Example RDF Graph}
\label{table:yagopredicateweights}
\end{table}
\vspace{-0.25in}

\section{Overview}\label{sec:overview}
In this section, we give an overview of the different steps involved in our process of SK query, which is depicted in Figure \ref{fig:SystemArchitecture}. In this paper, we are concerned with the challenge of efficiently finding the results of SK queries. We propose an approach in which the best results of the SK query are computed using the graph exploration. We detail the different steps of the approach below.

\vspace{-0.1in}
\begin{figure}[h]
\begin{center}
    \includegraphics[scale=0.27]{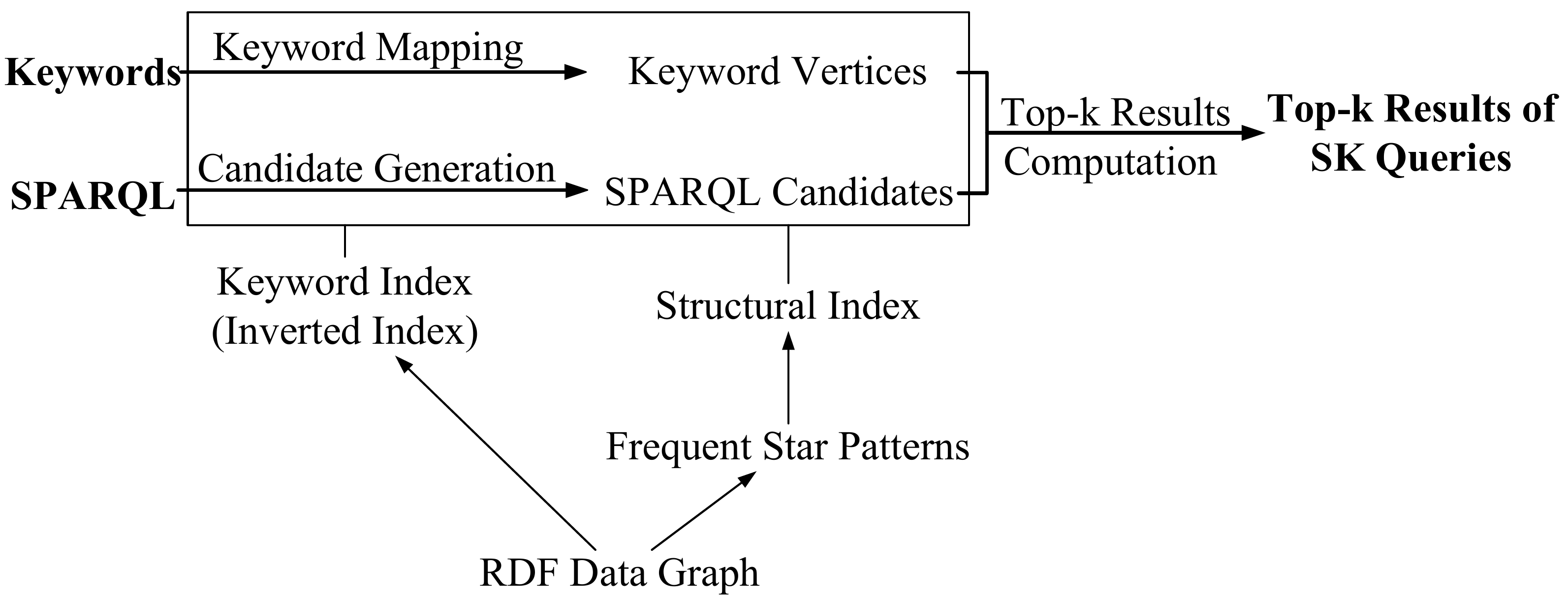}
    \vspace{-0.45in}
   \caption{Overview of Our Approach}
   \label{fig:SystemArchitecture}
\end{center}
\end{figure}
\vspace{-0.1in}

\textbf{Keyword Mapping.} In the offline phase, we create an inverted index storing a map from keywords to its locations in the RDF graph. In the online phase, we map keywords to vertices based on the inverted index.

For scoring keyword vertices, a widely used metric that is computed on-the-fly for a given query is IR-style TF/IDF cost. Many cost functions have been proposed in the literature, and we select one of them to assign the cost to each vertex containing keywords. Note that we need to normalize the cost of keyword matching vertices before the distances computation.

\nop{
Hence, for each keyword $w_i$, when we get a set $V_i$ of vertices containing $w_i$, we use the existing cost function to assign a vertex $v$ in $V_i$ with the matching cost. We denote the matching cost of between $v$ and $w_i$ as $C(w_i, v)$. When we start the search, for each vertex $v$ containing $w_i$, we initialize an element as $(v,\emptyset,C(v,w_i))$ and insert the element into a priority queue $Q_i$ for $w_i$. The priority queue $Q_i$ is sorted by $C(v,w_i)$. When a vertex $v$ is reachable to a keyword $w_i$, we assign a value of $d[v][i]$ to $v$, where $d[v][i]$ consists of not only the distance between $v$ and $w_i$ in RDF graph but also the content cost to record how well $w_i$ matches.
}

In this paper, our primary focus is indexing and query processing, so we will not delve into the specifics of keyword mapping.

\textbf{Candidate Generation.} When we find a vertex reachable to elements of all keywords, we need to run subgraph homomorphism to check whether there exist some subgraph matches (of $Q$) containing $v$. As we know, subgraph homomorphism is not efficient due to its high complexity \cite{DBLP:complexity}. In order to speed up query processing, we propose a filter-and-refine strategy to reduce the number of subgraph homomorphism operations. The basic idea is to filter out some vertices that are not in any subgraph match of $Q$. We call them ``\emph{dummy}'' vertices. If the search meets a dummy vertex, we do not perform subgraph homomorphism algorithm.

In this paper, we propose a frequent star pattern-based structural index. Based on this index, we can locate a candidate list in RDF graph of each variable in SPARQL. A vertex in at least one candidate lists of variables is not dummy. We will detail how to build the structural index in Section \ref{sec:structuralindex} and how to use the index to reduce the candidates of all variables in Section \ref{sec:CandidatesGeneration}.

\textbf{Top-k Results Computation.} Based on the keyword vertices and variables' candidates, we propose a solution based on graph exploration to compute out the top-k result of SK queries. Our approach starts graph exploration from all keyword vertices, and explores to their neighboring vertices recursively until the distances between a vertex and keyword vertices have been computed out. When the distances between a vertex and vertices of all keywords have been computed out, we check whether this vertex is a dummy vertex. If so, there exists no match of $Q$ that can contain it. Hence, we can skip it. Otherwise, we start our SPARQL matching algorithm (Algorithm \ref{alg:sparqlmatchingnew}) from the vertex to generate all matches containing it. The exploration terminates when the top-k results have been computed. We also propose some early stop strategies for top-k computation to reach early termination after obtaining the top-k results, instead of searching the data graph for all results.

We discuss the detail of top-k results computation in Section \ref{sec:topkcomputation}.

\vspace{-0.075in}
\section{Candidate Generation Based on Structural Index}\label{sec:index}
In this section, we first introduce an structural index based on a certain kind of patterns in Section \ref{sec:structuralindex}. Then, we discuss how to generate the candidate lists of variables based on our structural index \ref{sec:CandidatesGeneration}.

\vspace{-0.075in}
\subsection{Structural Index}\label{sec:structuralindex}
In this section, we propose a frequent star pattern-based index. We mine some frequent star patterns in $G$. For each frequent star $S$, we build an inverted list $L(S)$ that includes all vertices (in RDF graph $G$) contained by at least one match of $S$. A reason for selecting stars as index elements is that SPARQL queries tend to contain star-shaped subqueries, for combining several attribute-like properties of the same entity \cite{DBLP:rdf3x}.

We propose a sequential pattern mining-based method to find frequent star patterns in RDF graphs. For each entity vertex in an RDF graph, we sort all its adjacent edges in lexicographic order of edge labels (i.e. properties). These sorted edges can form a sequence. For example, vertex ``Philadelphia(film)'' has five adjacent edges, that are $\langle actedIn,$ $actedIn,$ $actedIn,$ $name,$ $type \rangle$. Table \ref{table:sequenceDB} shows a sequence database, where each sequence is formed by the adjacent edges of one entity vertex. We employ the existing sequential pattern mining algorithms, such as PrefixSpan\cite{DBLP:PrefixSpan}, to find frequent sequential patterns, where each sequential pattern corresponds to a star pattern in RDF graphs. For example, assume that the minimal support count $s=2$, $\langle actedIn,type \rangle$ and $\langle actedIn,type,wonPrize \rangle$ are two frequent sequential patterns. It is easy to know that a sequential pattern always corresponds to one star pattern, as shown in Figure \ref{fig:starpattern}. For ease of presentation, we use the terms ``sequential patterns'' and ``star patterns'' interchangeably in the following discussion.

\vspace{-0.1in}
\begin{table}[h]
\centering
\begin{tabular}{p{2.4cm}|p{5.8cm}}
\hline
Vertex&Predicate Sequence\\
  \hline
  \hline
  Philadelphia(film)&$\langle actedIn,actedIn,actedIn,label,type\rangle$\\
  \hline
  JoanneWoodward&$\langle actedIn,actedIn,label,type,wonPrize\rangle$\\
  \hline
  AntonioBanderas&$\langle actedIn,label,type,wonPrize,wonPrize\rangle$\\
  \hline
  ...&...\\
  \hline
  \end{tabular}
  \vspace{-0.2in}
\caption{Example of Predicate Sequences}
\label{table:sequenceDB}
\end{table}
\vspace{-0.25in}

\begin{figure}[h]
\begin{center}
    \includegraphics[scale=0.34]{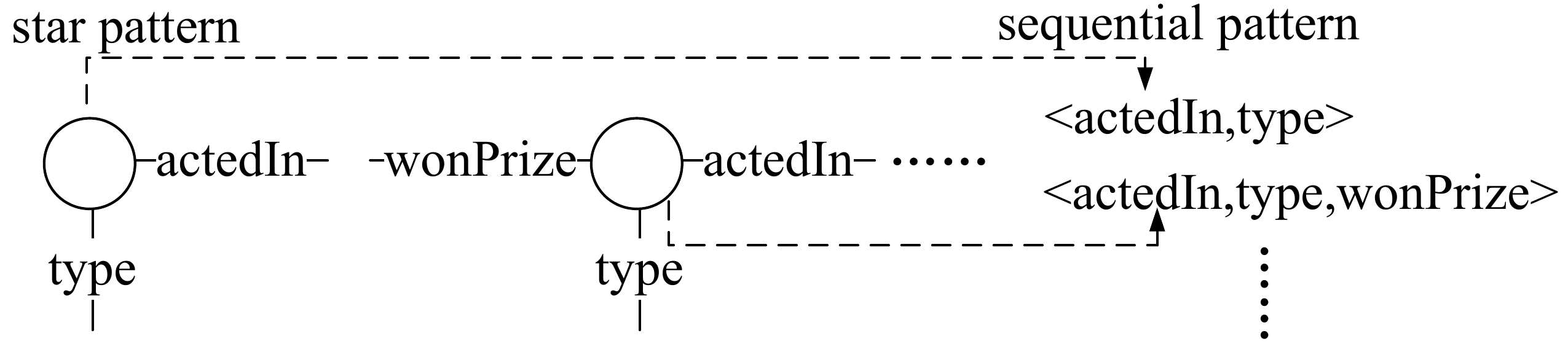}
    \vspace{-0.5in}
   \caption{ Example Star Pattern}
   \label{fig:starpattern}
\end{center}
\end{figure}
\vspace{-0.1in}

For each frequent star pattern $S$, we maintain an inverted list $L(S)=\{v |S$ occurs in $v$'s adjacent edge sequence$\}$. Obviously, if we use all frequent stars as the index elements, the space cost is very large. Thus, inspired by gIndex \cite{DBLP:gIndex}, we also define the \emph{discriminative ratio} for the star pattern selection.
\begin{definition}\label{def:selectivepattern}
Given a star $S$, its discriminative ratio is defined as follows:
\[
\gamma (S) = \frac{{|L(S)|}}{{| \cap _{S^\prime  \subset S} L(S^\prime )|}}
\]
where $S^\prime \subset S$ denotes that $S^\prime$ is a part of $S$.
\end{definition}

Obviously, $\gamma(S) \le 1$. $\gamma(S)=1$ means that $L(S)$ can be obtained by the intersection of all $L(S^\prime)$, where $S^\prime \subset S$. In this case, if all $S^\prime$ are index elements, it is not necessary to keep $S$ as the index element, as $S$ cannot provide more pruning power. In practice, we set a threshold $\gamma_{max}$, and we only choose the stars $S$, where $\gamma(S) \le \gamma_{max}$. Note that, to ensure the completeness of the indexing, we always
choose the (absolute) support to be 1 for size-1 stars (star with only one edge). This method can guarantee no-false-negative, since all vertices (in $G$) are indexed in at least one inverted list.

\vspace{-0.075in}
\begin{theorem}\label{theorem:starpruning} Let $F$ denote all selected index elements (i.e, frequent star patterns). Given a SPARQL query $Q$, a vertex $v$ in graph $G$ can be pruned (there exists no subgraph match of $Q$ containing $v$) if the following equation holds.
\begin{equation}
v \notin \bigcup\nolimits_{S \in F \wedge S \in Q} {L(S)}
\end{equation}
where $S \in F$ means that $S$ is a selected star pattern and $S \in Q$ is a star pattern included in $Q$.
\end{theorem}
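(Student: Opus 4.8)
The plan is to prove the contrapositive: if $v$ is contained in some subgraph match $M$ of $Q$ in $G$, then $v \in \bigcup_{S \in F \wedge S \in Q} L(S)$, which contradicts the hypothesis. First I would unfold the definition of a subgraph match: there is a homomorphism $f$ from the query graph $Q$ into $G$ whose image is $M$, and since $v \in M$ there is a query vertex $u$ with $f(u) = v$. The only property of $f$ I need is edge preservation: for every edge $(u,u')$ of $Q$ carrying a label $\ell$, the pair $(f(u),f(u')) = (v, f(u'))$ is an edge of $G$ carrying the same label $\ell$. Hence every label incident to $u$ in $Q$ also occurs among the edges incident to $v$ in $G$.

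Next I would localize to the star around $u$. Let $S_u$ be the star pattern consisting of $u$ together with all its incident edges in $Q$; assuming, as is standard for SPARQL basic graph patterns, that $Q$ has no isolated vertex, $S_u$ has at least one edge, say with label $\ell$ to a neighbour $u'$. By edge preservation $v$ has an incident edge labelled $\ell$, so $\ell$ occurs in $v$'s sorted adjacent-edge sequence, i.e. $v \in L(S)$ where $S$ is the size-$1$ star carrying label $\ell$. (More generally, keeping one edge per distinct label incident to $u$ gives a sub-star $S' \subseteq S_u$ with $v \in L(S')$, and since $L(\cdot)$ is anti-monotone --- $L(S) \subseteq L(S')$ whenever $S' \subseteq S$ --- it suffices to exhibit any selected index element lying below $S_u$.) Now I invoke the indexing construction of this section: every size-$1$ star whose label appears in $G$ is kept as an index element (the excerpt fixes its support to $1$ precisely to guarantee this), so $S \in F$; moreover $S \in Q$ because the edge $(u,u')$ labelled $\ell$ belongs to $Q$. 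Therefore $v \in L(S) \subseteq \bigcup_{S \in F \wedge S \in Q} L(S)$, the desired contradiction, so no subgraph match of $Q$ contains $v$.

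The step that needs the most care --- and the one I expect to be the main obstacle in a fully rigorous write-up --- is the edge-preservation argument under homomorphism (rather than isomorphism) semantics: because $f$ may identify two neighbours of $u$, the vertex $v$ need not ``contain'' the full star $S_u$, only its distinct-label sub-star, so the argument must always be routed through a sub-star that is guaranteed to be in $F$. This is exactly why keeping all size-$1$ stars in the index is essential for completeness, and I would state that dependence explicitly. Everything else --- the anti-monotonicity of $L(\cdot)$ under the sub-star relation, and the memberships $S \in F$ and $S \in Q$ --- is routine once the definitions are unwound.
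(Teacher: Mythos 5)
Your proposal is correct and follows essentially the same route as the paper: the paper's own proof is a one-sentence contrapositive assertion that if $v$ lay in a match of $Q$, the structure around $v$ would contain an indexed substructure (star) of $Q$, so $v$ would appear in some $L(S)$ with $S \in F \wedge S \in Q$. Your write-up simply makes this rigorous --- in particular the reduction to a size-$1$ sub-star (guaranteed to be in $F$ because size-$1$ stars are indexed with support $1$) and the caveat about homomorphisms identifying neighbours --- which the paper leaves implicit.
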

\begin{proof} If $v \notin \bigcup\nolimits_{S \in F \wedge S \in Q} {L(S)}$, it means that the structure around $v$ does not contain any substructure of $Q$. Hence, $v$ must be unable to in a match of $Q$.
\end{proof}

\subsection{Candidate Generation}\label{sec:CandidatesGeneration}
Given an SK query, we first tag the vertices that can be pruned by Theorem \ref{theorem:starpruning}. For each variable in SPARQL, we locate its candidates in RDF graph. Each variable can map to a predicate sequence according to the SPARQL statement. For example, variable ``?a'' of the SPARQL query in Figure \ref{fig:ExampleSPARQLKeyword} has the predicate sequence $\langle actedIn, type \rangle$. Then, for each variable $x$, we look up our structural index and find the maximum pattern contained by $x$'s predicate sequence. We load the vertex list of the maximum pattern as $x$'s candidates. A vertex in at least one vertex lists of variables is not dummy. We define these pruned vertices as \emph{dummy vertices} as follows.

\vspace{-0.1in}
\begin{definition}\label{def:dummyvertex}\textbf{Dummy Vertex.} Given a SPARQL query $Q$, a vertex $v$ in graph $G$ is called as \emph{dummy vertex} if the following equation holds.
\begin{equation}
v \notin \bigcup\nolimits_{S \in F \wedge S \in Q} {L(S)}
\end{equation}
where $F$ denote all selected frequent star patterns, $S \in F$ means that $S$ is a selected star pattern and $S \in Q$ is a star pattern included in $Q$.
\end{definition}

\vspace{-0.1in}

When the search process meets a fully-seen vertex $v$, if $v$ is not a dummy vertex, we perform subgraph isomorphism algorithm to find the subgraph match of SPARQL query $Q$ containing $v$. Otherwise, we do not perform subgraph isomorphism algorithm beginning with $v$.

\vspace{-0.1in}
\section{Top-k Results Computation}
\label{sec:topkcomputation}

In this section, we introduce our approach for SK queries, which is based on the ``backward search'' strategy \cite{DBLP:BANKS}. Our algorithm for searching top-k results of SK queries is shown in Algorithm \ref{alg:basicbackwardsearch}. This algorithm consists of three parts: 1) graph exploration to find vertices connecting the keyword vertices, 2) generation of SPARQL matches from the vertices connecting the keyword vertices and 3) top-k computation. In the following, we will elaborate on these three tasks.

\subsection{Graph Exploration}\label{sec:backwardsearch}
Given the keyword vertices, the objective of the exploration is to find vertices in the graph that connect these keyword vertices and compute their distances to these keyword vertices. Let $V_i$ denote all literal vertices (in RDF graph $G$) containing keyword $w_i$.

\vspace{-0.1in}
\begin{definition} \textbf{Distance between Vertex and Keyword}. Given a vertex $v$ in RDF graph $G$ and a keyword $w_i$, the distance between $v$ and keyword $w_i$ (denoted as $d(v,w_i)$) is the minimum distance between $v$ and a vertex in $V_i$, where $V_i$ includes all literal vertices containing keyword $w_i$ in $G$.
\end{definition}
\vspace{-0.1in}

\begin{algorithm}[t] \label{alg:basicbackwardsearch}\small
\caption{Search for Top-k Results of SK Queries}

\KwIn{RDF data graph $G$, SK query $\langle Q,q\rangle$, $\{V_1,...,V_n\}$ where $V_i$ is the set of vertices containing keyword $w_i$, priority queues $\{PQ_1,...,PQ_n\}$.}
\KwOut{ Top-k results $R$ of $\langle Q,q\rangle$.}

\For{each vertices set $V_i$}
{
    \For{each vertex $v$ in $V_i$}
    {
        Insert $(v,\emptyset,0)$ into $PQ_i$;
    }
}
\While{not all queues are empty}{
    \For{$i=1,...,n$}
    {
        Pop the head of $PQ_i$ (denoted as $(v,p,|p_i|)$), set $d[v][i]=|p_i|$ and insert it into $RS_i$;\\
        \For{each adjacent edge $\overline{vv^\prime}$ to $v$}
        {
            \If{$p \cup \overline{vv^\prime}$ is not a simple path}
            {
                Continue;
            }
            \If{there exists another element $(v^\prime, p^\prime, |p^\prime|)$ in $PQ_i$}
            {
                \If{$|p^\prime| > |p| + ps(\overline{vv^\prime}))$}
                {
                    Delete $(v^\prime,p^\prime,|p^\prime|)$;\\
                    Insert $(v^\prime,p \cup \overline{vv^\prime}, |p_i|+ps(\overline{vv^\prime}))$ in $PQ_i$;
                }
                \Else
                {
                    Continue;
                }
            }
            \Else
            {
                Insert $(v^\prime,p \cup \overline{vv^\prime}, |p|+ps(\overline{vv^\prime}))$ in $PQ_i$;
            }
        }
        \If{$v$ is a fully-seen vertex}
        {
            Call Algorithm \ref{alg:sparqlmatchingnew} to find all matches containing $v$;
        }
        \For{each match $M$ containing vertex $v$}
        {
            \If{all vertices in $M$ are fully-seen vertices}
            {
                Use $M$ to update $R$ and the upper bound $\delta$ of top-k results
            }
        }
        Update the cost of all partially-seen matches and $\delta$;\\
        Update the lower bound cost $\theta$ of all remaining un-seen vertices;\\
    }
    \If{$\theta\ge\delta$}
    {
        Break;
    }
}
Return $R$.
\end{algorithm}

For graph exploration, we maintain a priority queue $PQ_i$ for each keyword $w_i$. Each element in $PQ_i$ is represented as $(v,p,|p|)$, where $v$ is a vertex id, $p$ is a path between $v$ and a vertex in $V_i$ and $|p|$ denotes the path distance. All elements in $PQ_i$ are sorted in the non-descending order of $|p|$. Each keyword $w_i$ is also associated with a result set $RS_i$. In order to keep track of information related to each vertex $v$, we associate $v$ with a vector $d[v]$.  If a vertex $v$ is in $RS_i$ ($i=1,...,n$), the shortest path distance is known. In this case, we set $d[v][i]=d(v,w_i)$; otherwise, we set $d[v][i]=null$.

Initially, the exploration starts with a set of vertices containing keywords. For each vertex $v$ containing keyword $w_i$, an element $(v,\emptyset,0)$ is created and placed into the queue $PQ_i$ (Line 3 in Algorithm \ref{alg:basicbackwardsearch}). During the search, at each step, we pick a queue $PQ_i$ ($i=1,...,n$) to expand in a round-robin manner (Line 5 in Algorithm \ref{alg:basicbackwardsearch}). We assume that we pop the queue head $(v,p,|p|)$ from $PQ_i$. When a queue head $(v,p,|p|)$ is popped from queue $PQ_i$, we insert it into result set $RS_i$ and set ${d}[v][w_i]$ = $|p|$ (Line 6 in Algorithm \ref{alg:basicbackwardsearch}). We can prove that the following theorem holds.

\begin{theorem}\label{theorem:distancecorrectness} When a queue head $(v,p,|p|)$ is popped from queue $PQ_i$, the following equation holds.
\[
d(v,w_i )=d[v][i]=|p|
\]
\end{theorem}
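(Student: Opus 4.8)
The plan is to recognize Algorithm \ref{alg:basicbackwardsearch}'s expansion of $PQ_i$ as a variant of Dijkstra's single-source shortest-path algorithm, where the "source" is the (virtual) super-node connected to every vertex in $V_i$ with zero-cost edges, and every original edge $\overline{vv^\prime}$ carries weight $ps(\overline{vv^\prime}) \ge 0$. Since all edge weights are nonnegative, the standard Dijkstra correctness argument applies. I would prove the claim by induction on the order in which queue heads are popped from $PQ_i$. The induction hypothesis is that, at the moment a head $(v,p,|p|)$ is popped, $|p|$ equals $d(v,w_i)$ — i.e., the true shortest-path distance from $v$ to the nearest vertex in $V_i$ — and $p$ realizes this distance.

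First I would handle the base case: the very first pops are the initial elements $(v,\emptyset,0)$ for $v \in V_i$, and indeed $d(v,w_i)=0$ for such $v$, so the hypothesis holds. For the inductive step, suppose the head $(v,p,|p|)$ is popped from $PQ_i$ and all previously popped heads satisfied the hypothesis (i.e., are "settled" with correct distances, recorded in $RS_i$). I would argue $|p| = d(v,w_i)$ by the usual two inequalities. One direction, $|p| \ge d(v,w_i)$, is immediate: $p$ is an actual simple path from $v$ to some vertex of $V_i$ with total weight $|p|$, so the shortest such distance is at most $|p|$. For the other direction, consider any shortest path $p^\ast$ from $v$ to $V_i$ with weight $d(v,w_i)$; walk along $p^\ast$ from the $V_i$-end toward $v$ and let $u$ be the first vertex on it that has not yet been settled (such a $u$ exists since $v$ itself is unsettled just before this pop, and the $V_i$-endpoint is settled). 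The predecessor $u''$ of $u$ on $p^\ast$ is settled, so when $u''$ was popped the algorithm relaxed the edge $\overline{u'' u}$ and inserted (or updated) an element for $u$ in $PQ_i$ with key at most $d(u'',w_i)+ps(\overline{u'' u}) = $ (weight of the prefix of $p^\ast$ up to $u$) $\le d(v,w_i) \le |p|$. Since $(v,p,|p|)$ is the current minimum in $PQ_i$, we get $|p| \le$ (key of $u$'s element) $\le d(v,w_i)$, giving equality; and then $d[v][i]$ is set to $|p| = d(v,w_i)$ as claimed.

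The main subtlety — and the step I would be most careful about — is the restriction to \emph{simple} paths: the algorithm discards any relaxation where $p \cup \overline{vv^\prime}$ is not a simple path (Line 9). I need to check this does not cause the algorithm to miss the shortest path, which is true because with nonnegative weights there is always a shortest path that is simple (any cycle can be excised without increasing cost), so the path $p^\ast$ chosen in the inductive step can be taken simple, and every one of its prefixes is simple as well — hence no relaxation along $p^\ast$ is ever blocked by the simplicity test. A secondary point to verify is that the "delete-and-reinsert" update logic (Lines 13--17) correctly maintains, for each vertex, the minimum key among all discovered paths to it, so that the relaxation performed when $u''$ was popped indeed leaves an element for $u$ whose key is no larger than the prefix weight; this is a routine invariant on the contents of $PQ_i$. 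Once these two points are settled, the Dijkstra-style argument goes through verbatim and the theorem follows.
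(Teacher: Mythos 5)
Your proof is correct and follows essentially the same route as the paper's: both are the standard Dijkstra-correctness argument that examines the first unsettled vertex on a shortest path to $V_i$, uses the fact that its settled predecessor already relaxed the connecting edge, and concludes the popped key equals the true distance (the paper phrases it as a proof by contradiction, you as induction on pop order, which is the same argument). Your explicit treatment of the simple-path restriction in Line 9 and of the delete-and-reinsert invariant is a welcome addition that the paper's proof leaves implicit.
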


\begin{proof}
Given a vertex $v$, before $(v,p,|p|)$ is popped from $PQ_i$, a path $p$ between $v$ and vertices containing $w_i$. It is obvious that $|p|\ge d(v,w_i )$.

We wish to show that in each iteration, $d(v,w_i)=d[v][i] = |p|$ for the element $(v,p,|p|)$ popped from $PQ_i$. We prove this by contradiction. We assume that $v$ is the first vertex for which $d[v][i]=|p| \ne d(w_i, v)$ when $(v,p,|p|)$ is popped from $PQ_i$. We focus our attention on the situation at the beginning of the iteration in which $(v,p,|p|)$ is popped from $PQ_i$ and derive the contradiction that $d[v][i]=|p|= d(v,w_i)$ at that time by examining the shortest path from $v$ to vertices containing $w_i$. We must have $v\notin V_i$ because all vertices in $V_i$ are the first vertices added to set $RS_i$ and $d[v][i] = 0$ at that time.

Because $v\notin V_i$, we also have that $RS_i\notin \emptyset$ just before $(v,p,|p|)$ is popped from $PQ_i$. There must be some paths from vertices containing $w_i$ to $v$, for otherwise $d[v][i] = \infty$ by the no-path property, which would violate our assumption that $d[v][i] \ne d(w_i, v)$. Because there is at least one path, there is the shortest path $p^\prime$ between $v$ and vertices in $V_i$. Prior to pop $(v,p,|p|)$ to $PQ_i$, path $p^\prime$ connects a vertex in $RS_i$, namely some vertices in $V_i$, to a vertex in $V(G) - RS_i$, namely $v$. Let us consider the first vertex $v^\prime$ along $p^\prime$ such that $v^\prime\in V(G) - RS_i$, and let $v^{\prime\prime}\in RS_i$ be the predecessor of $v^\prime$.

We claim that $d[v^{\prime}][i] = d(w_i, v^{\prime})$ when the element of $v^{\prime}$ is popped from $PQ_i$. To prove this claim, observe that $v^{\prime\prime}\in RS_i$. Then, because $v$ is chosen as the first vertex for which $d[v][i] \ne d(w_i, v)$ when $(v,p,|p|)$ is popped from $PQ_i$, we have $d[v^{\prime}][i] = d(w_i, v^{\prime})$ when $v^{\prime}$ is added to $RS_i$. Edge $\overline{v^{\prime}v^{\prime\prime}}$ is relaxed at that time (Line 7 - 17 in Algorithm \ref{alg:basicbackwardsearch}), so the claim follows from the convergence property.

We can now obtain a contradiction to prove that $d[v][i] = d(v, w_i)$. Because $v^{\prime}$ occurs before $v$ on the shortest path from vertices in $V_i$ to $v$ and all edge weights are nonnegative, we have $d[v^{\prime}][i]\le d(v,w_i)$, and thus $d(v^{\prime},w_i)=d[v^{\prime}][i]\le  d(v,w_i) \le d[v][i]$.

But because both vertices $v$ and $v^{\prime}$ are in $V(G) - RS_i$ when $v^{\prime}$ is popped before $v$, we have $d(v^{\prime},w_i) \le d(v,w_i)$. Thus, $d(v^{\prime},w_i)=d[v^{\prime}][i]=d(v,w_i)$, which contradicts our choice of $v$. We conclude that $d[v][i] = d(w_i, v)$ when $(v,p,|p|)$ is popped from $PQ_i$, and that this equality is maintained at all times thereafter.
\end{proof}

When a queue head $(v,p,|p|)$ is popped from queue $PQ_i$, it means that we have computed out the distance between $v$ and keyword $w_i$. We also says that $v$ is \emph{seen} by keyword $w_i$.

\begin{definition} \textbf{Seen by Keyword}. When a queue head $(v,p,|p|)$ is popped from queue $PQ_i$, we say vertex $v$ is \emph{seen} by keyword $w_i$.
\end{definition}

Assume that $(v,p,|p|)$ is popped from queue $PQ_i$. For each incident edge $\overline{vv^\prime}$ to $v$, we obtain a new element $(v^{\prime},p \cup \overline{vv^{\prime}}, |p|+ps(\overline{vv^{\prime}}))$, where $p \cup \overline{vv^{\prime}}$ means appending an edge to $p$ and $ps(\overline{vv^{\prime}})$ denotes the predicate salience value of the edge label of $\overline{vv^{\prime}}$. If $p_i \cup \overline{vv^{\prime}}$ is not a simple path\footnote{A simple path is a path with no repeated vertices.}, the element is ignored (Line 8-9 in Algorithm \ref{alg:basicbackwardsearch}). Then, we check whether there exists another element $(v^{\prime},p^\prime,|p^\prime|)$ that has the identical vertex $v^{\prime}$ with the new element $(v^{\prime},p \cup \overline{vv^{\prime}}, |p|+ps(\overline{vv^{\prime}}))$, where $|p^\prime|>|p|+ps(\overline{vv^{\prime}}))$. If so, we delete $(v^{\prime},p^\prime,|p^\prime|)$ from $PQ_i$ and insert $(v^{\prime},p \cup \overline{vv^{\prime}}, |p|+ps(\overline{vv^{\prime}}))$ into $PQ_i$ (Line 12-13 in Algorithm \ref{alg:basicbackwardsearch}). Otherwise, we ignore the new element (Line 15 in Algorithm \ref{alg:basicbackwardsearch}). If there exists no element $(v,p^\prime,|p^\prime|)$, we insert $(v^{\prime},p \cup \overline{vv^{\prime}}, |p|+ps(\overline{vv^{\prime}}))$ into the queue directly (Line 17 in Algorithm \ref{alg:basicbackwardsearch}).

\begin{definition} \textbf{Fully-seen Vertex, Partially-seen Vertex and Un-seen Vertex}. Given a vertex $v$, if $v$ is seen by all keywords $w_i$ ($i=1,...,n$), $v$ is called a \emph{fully-seen} vertex; if $v$ is not a fully-seen vertex but it has been seen by at least one keyword, $v$ is called a \emph{partially-seen} vertex; if $v$ is not seen by any keyword, $v$ is called an \emph{un-seen} vertex.
\end{definition}
\vspace{-0.05in}

At each step, we check whether the vertex just popped from the queue has seen by all keywords. Specifically, for a popped queue head $v$, if all dimensions of its vector $d[v]$ are non-null, it means that all keywords have seen vertex $v$, i.e., we have known the distance between $v$ and each keyword. In this case, $v$ is a \emph{fully-seen} vertex. When we meet a fully-seen vertex $v$, we will employ a subgraph homomorphism algorithm to find matches containing $v$ (Line 18-19 in Algorithm \ref{alg:basicbackwardsearch}). The details will be discussed in Section \ref{sec:computingsparqlmatch}.

\vspace{-0.05in}
\subsection{Generation of SPARQL Matches}\label{sec:computingsparqlmatch}
When we find out a fully-seen vertex $v$, it means that we have known the distance between $v$ and each keyword $w_i$. The next step is to compute SPARQL matches containing vertex $v$ if any. Here, we perform subgraph homomorphism algorithm to find subgraph matches (of query $Q$) containing $v$.

Generally speaking, we employ a DFS-based \emph{state} \emph{transformation} algorithm to perform the matching process beginning from a fully-seen vertex $v$ (as shown in Algorithm \ref{alg:sparqlmatchingnew}). Here, we define the \emph{state} as follows.

\begin{definition} Given a SPARQL query graph $Q$ with $m$ vertices $u_1,...,u_m$, a \emph{state} is a (partial) match of query graph $Q$.
\end{definition}

For example, Figure \ref{fig:partialmatch} shows an example state of the SPARQL query in Figure \ref{fig:ExampleSPARQLKeyword}.

\vspace{-0.1in}
\begin{figure}[h]
\begin{center}
    \includegraphics[scale=0.25]{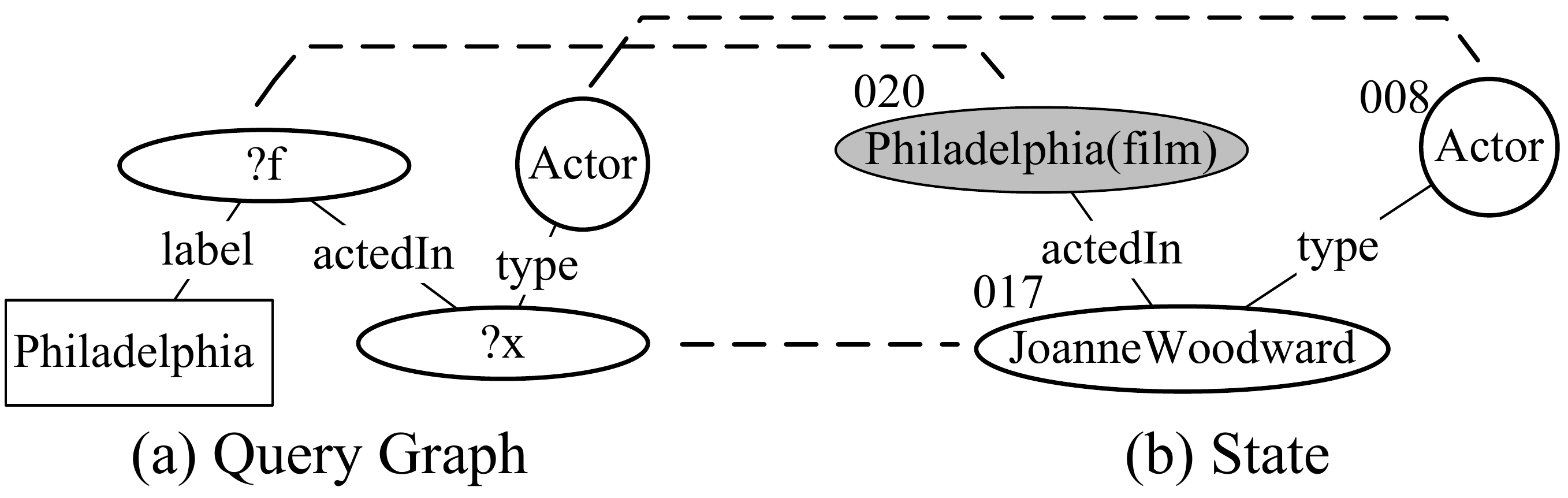}
    \vspace{-0.3in}
   \caption{ A State Matching a Part of $Q$}
   \label{fig:partialmatch}
\end{center}
\end{figure}
\vspace{-0.1in}

\begin{figure*}
\begin{center}
    \includegraphics[scale=0.24]{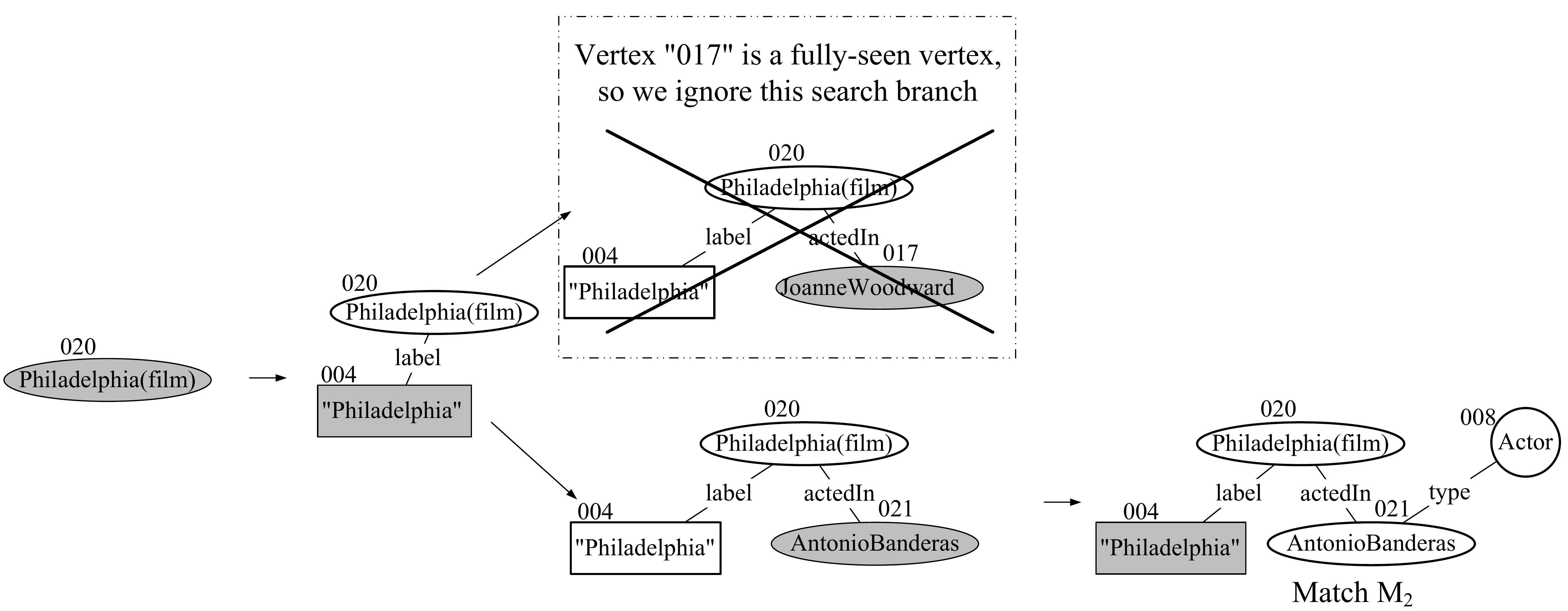}
   \caption{ Finding Matches Containing Vertex ``$020$'' by Pruning the Search Branch Beginning from ``$017$''}
   \label{fig:statepruning}
   \vspace{-0.35in}
\end{center}
\end{figure*}

In particular, our \emph{state} \emph{transformation} algorithm is as follows. Assume that $v$ matches vertex $u$ in SPARQL query $Q$. We first initialize a state with $v$. Then, we search the RDF data graph to reach $v$'s neighbor $v^\prime$ corresponding to $u^\prime$ in $Q$, where $u^\prime$ is one of $u$'s neighbors and edge $\overline{vv^\prime}$ satisfies query edge $\overline{uu^\prime}$. The search will extend the state step by step. The search branch terminates until that we have found a state corresponding to a match or we cannot continue. In this case, the algorithm is backtrack to some other states and try other search branches.

\begin{algorithm}[t] \label{alg:sparqlmatchingnew}\small
\KwIn{ A candidate vertex $v$ corresponding to $u$ in SPARQL query $Q$, and a state stack $S$.}
\KwOut{ The match set $MS$ of $Q$ containing $v$.}

Initialize a state $s$ with $v$;\\
Push $s$ into $S$;\\
\While{$S \neq \emptyset$}
{
    Pop the first state $s\in S$;\\
    \If{all edges of $Q$ have been matched in $s$}
    {
        Insert $s$ to $MS$;
    }
    \For{each unmatched edge $\overline{u^\prime u^{\prime\prime}}$ that $u^\prime$ has been matched to $v^\prime$}
    {
        \If{$u^{\prime\prime}$ has been matched to $v^{\prime\prime}$}
        {
            \If{$\overline{v^{\prime} v^{\prime\prime}}\in E(G)$}
            {
                Push $s$ into $S$;
            }
            \Else
            {
                Continue;
            }
        }
        \Else{
            \For{each neighbor $v^{\prime\prime}$ of $v^{\prime}$}
            {
                \If{$v^{\prime\prime}$ is a dummy or fully-seen vertex}
                {
                    Continue;
                }
                \If{$\overline{v^{\prime} v^{\prime\prime}}$ can match $\overline{u^\prime u^{\prime\prime}}$}
                {
                    Initialize a new state $s^\prime$ and $s^\prime= s$;\\
                    Match $u^{\prime\prime}$ with $v^{\prime\prime}$;\\
                    Push $s$ into $S$;
                }
            }
        }
    }
}
Return $MS$.
\caption{SPARQL Matching Algorithm}
\end{algorithm}

As shown in Algorithm \ref{alg:sparqlmatchingnew}, we find all matches containing some fully-seen vertex $v$ only if $v$ is not be a dummy vertex(Line 15 - 16 in Algorithm \ref{alg:sparqlmatchingnew}). This is because that there exists no subgraph match containing a dummy vertex. When we finish Algorithm \ref{alg:sparqlmatchingnew} from $v$, we say that $v$ is \emph{searched}. The ``searched'' indicates that all matches containing $v$ has been found if any. When we search the RDF graph beginning with a fully-seen vertex, if the search meets another fully-seen vertex $v^{\prime\prime}$, it can skip $v^{\prime\prime}$ (Line 15 - 16 in Algorithm \ref{alg:sparqlmatchingnew}). This is because that the matches containing $v^{\prime\prime}$ have been found before.

\vspace{-0.05in}
\begin{example}
We assume that the current popped fully-seen vertex is ``$020$'' and vertex ``$017$'' is another a fully-seen vertex. As shown in Figure \ref{fig:statepruning}, we explore the RDF graph from ``$020$'' to ``$017$''. However, vertex ``$017$'' is a fully-seen vertex, so all SPARQL matches containing ``$017$'' have been found already. As a result, we can terminate the corresponding search branches in Figure \ref{fig:statepruning}.
\end{example}

\vspace{-0.1in}
\subsection{Top-k Computation}\label{sec:topk}
The native solution for computing top-k results of a SK query is to run the backward search algorithm until that all vertices (in RDF graph $G$) have been fully-seen by keywords. Then, according to the results' cost, we can find the top-k results. Obviously, this is an inefficient solution especially when $G$ is very large. In this subsection, we design an early-stop strategy.

Let us consider a snapshot of some iteration step in Algorithm \ref{alg:basicbackwardsearch}. All subgraph matches of SPARQL query $Q$ can be divided into three categories: fully-seen matches, partially-seen matches and un-seen matches.

\vspace{-0.05in}
\begin{definition}\textbf{Fully-seen Match, Partially-seen Match and Un-seen Match.} Given a subgraph match $M$ of SPARQL query $Q$, if all vertices in $M$ are fully-seen vertices, $M$ is called a \emph{fully-seen match}; if $M$ is not a fully-seen match and $M$ contains at least one fully-seen vertex, it is called  a \emph{partially-seen match}. If a match $M$ does not contain any fully-seen vertex, it is called an \emph{un-seen match}.
\end{definition}

Figure \ref{fig:visualspt} demonstrates a visual representation of three kinds of matches. The shaded area covered by the dash line circle denotes all fully-seen vertices in RDF graph. With the increasing of the iteration steps (in Algorithm \ref{alg:basicbackwardsearch}), the shaded area expands gradually until that it covers the whole RDF graph. The early-stop strategy is to stop the expansion as early as possible, but we can guarantee that we have found the top-k results for SK queries.

\begin{figure}[h]
\begin{center}
    \includegraphics[scale=0.3]{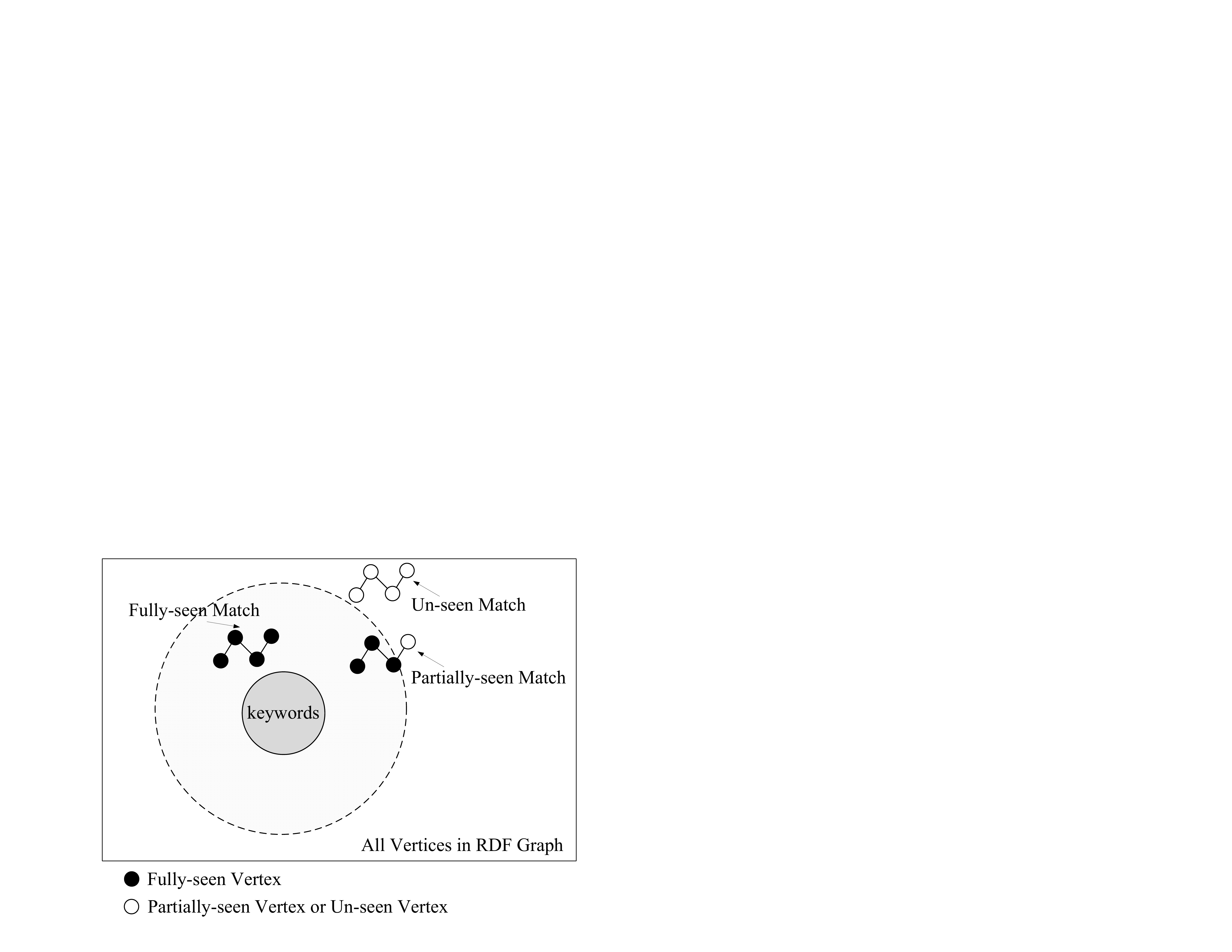}
   \caption{ Fully-seen Match, Partially-seen Match and Un-seen Match during the Backward Search}
   \vspace{-0.3in}
   \label{fig:visualspt}
\end{center}
\end{figure}

\nop{
In Algorithm \ref{alg:basicbackwardsearch}, only when we meet a fully-seen vertex, we find all subgraph matches containing the vertex. It also means that we always find all fully-seen matches and partially-seen matches but we do not identify un-seen matches. Let $FS$, $PS$ and $NS$ denote all fully-seen matches,  partially-seen matches and un-seen matches, respectively.
}

The basic idea of our early-stop strategy is as follows. We only compute the cost of fully-seen matches. Then, we use the fully-seen matches to find a threshold $\delta$, which is the $k$-th smallest cost so far. If there are less than $k$ fully-seen matches so far, $\delta$ is $\infty$. We compute the lower bounds $\theta_1$ and $\theta_2$ for partially-seen matches and un-seen matches, respectively. The algorithm can early stop if and only if $\delta <\theta_1 \wedge \delta <\theta_2$. Otherwise, the algorithm continues the next iteration.

\textbf{Fully-seen Match.} For a fully-seen match, we compute its match cost according to Definition \ref{def:structure}. If we have found more than $k$ fully-seen matches, we maintain a threshold $\delta$ be the $k$-th smallest match cost.

\textbf{Partially-seen Match.} For any partially-seen match, we compute the lower bound of its cost as follows.

\vspace{-0.05in}
\begin{theorem}\label{theorem:matchlowerbound} Given a partially-seen match $M$ of SPARQL query $Q$, $v$ is a partially-seen vertex or an un-seen vertex  in the match. The following equation holds.
\[
\begin{array}{l}
 Cost(M) =  \sum\limits_{1 \le i \le n} {d(v,w_i )}   \\
 \ge \sum\limits_{d[v][w_i ] \ne null \wedge 1 \le i \le n} {d[v][w_i]}  + \sum\limits_{d[v][w_i ] = null \wedge 1 \le i \le n} {|p_i |}  \\
 \end{array}
\]

where $d[v][w_i]$ is the $i$-th dimension of $v$'s vector corresponding to keyword $w_i$, and $|p_i|$ corresponds to the current queue head $(v,p_i,|p_i|)$ in queue $PQ_i$.
\end{theorem}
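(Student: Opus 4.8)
The plan is to establish the claimed lower bound for $Cost(M)$ by lower-bounding the structure cost of $M$ using a single witness vertex $v\in M$ and then, for each keyword $w_i$, replacing the (possibly unknown) true distance $d(v,w_i)$ by a quantity we have already computed or that the priority queue guarantees is a valid underestimate. First I would recall from Definition~\ref{def:structure} that $Cost_{structure}(M)=\sum_{i=1}^{n} d(M,v_i)$ and that $d(M,v_i)=MIN_{v'\in M}\{d(v',v_i)\}$, so for any \emph{fixed} vertex $v\in M$ matching a variable of $Q$ we have $d(M,v_i)\le d(v,v_i)$; summing over $i$ this does \emph{not} immediately give the stated inequality, so the key observation is the reverse: since we will use $Cost(M)$ here to mean the partial/lower-bound cost attached to the partially-seen match through its representative vertex $v$, the quantity $\sum_{1\le i\le n} d(v,w_i)$ is exactly what the algorithm conceptually charges to $M$ via $v$, and it suffices to bound \emph{that} sum from below. (In the write-up I would make explicit that $v$ is the fully-seen-or-frontier vertex of $M$ that the search is currently processing.)

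The core of the argument is a keyword-by-keyword split of $\sum_{1\le i\le n} d(v,w_i)$ into the two index sets appearing in the statement. For those $i$ with $d[v][w_i]\ne null$: by Theorem~\ref{theorem:distancecorrectness}, once an element for $v$ has been popped from $PQ_i$ we have $d[v][w_i]=d(v,w_i)$ exactly, so these terms contribute $\sum_{d[v][w_i]\ne null} d[v][w_i]$ with equality. For those $i$ with $d[v][w_i]=null$: $v$ has not yet been seen by $w_i$, so $v$ currently sits in $PQ_i$ (or will only ever be reached through the current frontier), and since $PQ_i$ is a Dijkstra-style non-decreasing priority queue over nonnegative edge weights $ps(\cdot)$, the value $|p_i|$ of the current head $(v,p_i,|p_i|)$ of $PQ_i$ is a lower bound on the eventual shortest distance from $v$ to any vertex in $V_i$; formally $d(v,w_i)\ge |p_i|$ because every not-yet-popped tentative distance, $v$'s included, is at least the current head value, and the final $d(v,w_i)$ cannot be smaller than a value already extracted along a shortest path. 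Combining the two sets gives
\[
\sum_{1\le i\le n} d(v,w_i)\;\ge\;\sum_{d[v][w_i]\ne null}\! d[v][w_i]\;+\!\sum_{d[v][w_i]=null}\! |p_i|,
\]
which is the claim.

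The main obstacle I anticipate is making the statement ``$d(v,w_i)\ge |p_i|$ for the current head $|p_i|$ of $PQ_i$'' fully rigorous, since $v$ is itself the head being processed in $PQ_i$ only for those $i$ in the first set; for $i$ in the second set one must argue that the relevant lower bound is the head of $PQ_i$ \emph{at the moment $v$ is being processed in some other queue}, and that this head value is monotonically non-decreasing across the whole run of Algorithm~\ref{alg:basicbackwardsearch}. This is a standard invariant of Dijkstra's algorithm (the sequence of extracted keys is non-decreasing, and any vertex not yet extracted has tentative distance $\ge$ the last extracted key $\ge$ current head), so I would state it as a short lemma or simply cite the convergence/upper-bound properties already invoked in the proof of Theorem~\ref{theorem:distancecorrectness}, and then the rest is the routine index-set bookkeeping above.
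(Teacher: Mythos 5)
Your proposal is correct and follows essentially the same route as the paper's proof: split the keyword sum by whether $d[v][w_i]$ is $null$, use the exactness of popped distances (Theorem~\ref{theorem:distancecorrectness}) for the seen keywords, and use the Dijkstra-style invariant that the current head $|p_i|$ of $PQ_i$ lower-bounds the distance of any not-yet-seen vertex for the rest. Your additional care about the non-decreasing order of extracted keys (and about $Cost(M)$ being charged through a single witness vertex $v$) only makes explicit what the paper's shorter argument leaves implicit.
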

\begin{proof} If $d[v][i ] \ne null$, it means that we have computed out $d(v,w_i )$. If $d[v][w_i] = null$, $v$ has still not been seen. Since each time we pop the head $(v,p_i,|p_i|)$ of $PQ_i$ where $|p_i|$ is the smallest, all un-seen vertices' distances to $w_i$ are larger than $|p_i|$.
\end{proof}

According to Theorem \ref{theorem:matchlowerbound}, we define the lower bound of a partially-seen match $M$ as follows.

\begin{definition}\label{def:lowerbound} Given a match of SPARQL query $Q$, the lower bound for a partially-seen match $M$ is defined as follows.
\[
lb(M) = \mathop {MIN}\limits_{v \in M} (\sum\limits_{d[v][w_i ] \ne null \wedge 1 \le i \le n} {d[v][w_i ]}  + \sum\limits_{d[v][w_i ] = null \wedge 1 \le i \le n} {|p_i |} )
\]

\end{definition}

The lower bound for all partially-seen matches is defined as follows.

\begin{definition} The lower bound $\theta_1$ for all partially-seen matches is as follows.
\[
\theta _1  = MIN_{M \in PS} (lb(M))
\]
where $PS$ denotes all partially-seen matches and $lb(M)$ is defined in Definition \ref{def:lowerbound}.
\end{definition}

With the increasing of the iteration steps, some partially-seen matches become fully-seen matches. They are moved to $FS$. The threshold $\delta$ and $\theta_1$ are updated accordingly.

\textbf{Un-seen Match.} Let us consider an un-seen match $M$. There are two kinds of vertices in $M$, i.e., partially-seen vertices and un-seen vertices.

\vspace{-0.05in}
\begin{theorem}\label{theorem:2} For an un-seen vertex $v$, if threshold $\delta$ $\neq$ $\infty$, the following equation holds.
\[
\delta  \le \sum\limits_{1 \le i \le n} {d(v,w_i )}
\]

\end{theorem}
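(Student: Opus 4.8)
The plan is to introduce the quantity $\sum_{i=1}^{n}|p_i|$, the sum over all keywords of the distance value stored in the current head of the priority queue $PQ_i$, and to use it as a bridge between fully-seen matches and the quantity $\sum_{i=1}^{n}d(v,w_i)$. I would establish two facts: (a) $Cost(M)\le\sum_{i=1}^{n}|p_i|$ for every fully-seen match $M$, and (b) $\sum_{i=1}^{n}d(v,w_i)\ge\sum_{i=1}^{n}|p_i|$ for every un-seen vertex $v$. Granting these, observe that $\delta\ne\infty$ means that at least $k$ fully-seen matches have already been discovered and $\delta$ equals the cost of the $k$-th cheapest of them; applying (a) to that particular match gives $\delta\le\sum_{i=1}^{n}|p_i|$, and chaining with (b) yields the claim $\delta\le\sum_{i=1}^{n}d(v,w_i)$.

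For (a), the key observation is that the distance values popped from a fixed queue $PQ_i$ form a non-decreasing sequence; this is the same monotonicity that underlies the proof of Theorem~\ref{theorem:distancecorrectness}, and it survives the decrease-key steps on Lines~12--13 of Algorithm~\ref{alg:basicbackwardsearch} because every key inserted there has the form (already-popped distance)$\,+\,ps(\cdot)$, which is at least the already-popped distance. Consequently a fully-seen vertex $u$, having already been popped from $PQ_i$ with value $d(u,w_i)$ by Theorem~\ref{theorem:distancecorrectness}, satisfies $d(u,w_i)\le|p_i|$. Since every variable-matching vertex of a fully-seen match $M$ is fully-seen, $d(M,w_i)=MIN_{v\in M}\{d(v,w_i)\}\le|p_i|$, and summing over $i=1,\dots,n$ gives (a).

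For (b), which is essentially the reasoning already used in the proof of Theorem~\ref{theorem:matchlowerbound}, take an un-seen vertex $v$: by definition it has not yet been popped from any $PQ_i$. If $v$ is reachable from the vertices in $V_i$, it will eventually be popped from $PQ_i$, and by Theorem~\ref{theorem:distancecorrectness} the value popped then is $d(v,w_i)$; since that pop is no earlier than the current iteration, monotonicity forces $d(v,w_i)\ge|p_i|$. If $v$ is unreachable from $V_i$, then $d(v,w_i)=\infty\ge|p_i|$ by the no-path property. In both cases $d(v,w_i)\ge|p_i|$, and summing over $i$ gives (b). The degenerate case of an empty $PQ_i$ is covered by reading $|p_i|=\infty$, which forces $v$ to be unreachable from $V_i$ and makes the target inequality trivial.

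The main obstacle, and really the only step requiring care, is the non-decreasing-pops property in the presence of the decrease-key operations; once that is pinned down, both (a) and (b) are immediate from Theorem~\ref{theorem:distancecorrectness} and the definitions of fully-seen and un-seen vertices, and the theorem follows by arithmetic.
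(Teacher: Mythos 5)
Your proposal is correct and follows essentially the same route as the paper's proof: both bridge through the current queue-head values $|p_i|$, using that every vertex of a fully-seen match (whose cost realizes $\delta$) satisfies $d(\cdot,w_i)\le|p_i|$ while an un-seen vertex satisfies $|p_i|\le d(v,w_i)$, then chaining and summing over $i$. Your write-up merely spells out the non-decreasing-pops monotonicity and the unreachable/empty-queue cases, which the paper leaves implicit.
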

\begin{proof}
For each keyword $w_i$, we assume that the queue head of $PQ_i$ is $(v,p_i,|p_i|)$. Since $v$ is an un-seen vertex, $|p_i|\le d(v,w_i)$ for each keyword $w_i$. In contrast, $\delta$ is the upper bound of the top-k results, so $\delta$ is equal to the cost of a fully-seen match $M$. Each vertex $v^\prime$ in $M$ is fully-seen vertex. Hence, $d(v^\prime,w_i)\le |p_i|$. Then, we know that $d(v^\prime,w_i)\le |p_i|\le d(v,w_i)$ for each keyword $w_i$. In conclusion, $\delta  \le \sum\limits_{1 \le i \le n} {d(v,w_i )}$.
\end{proof}

According to Theorem \ref{theorem:2}, it is not necessary to consider un-seen vertices to define the lower bound for un-seen matches. Therefore, we define the lower bound for all un-seen matches as follows.

\begin{definition}\label{def:lowerboundunseenvertex} The lower bound $\theta_2$ for all un-seen matches is as follows.
\[
\theta_2  = \mathop {MIN}\limits_{v \in PSet} (\sum\limits_{d[v][w_i] \ne null \wedge 1 \le i \le n} {d[v][w_i ]}  + \sum\limits_{d[v][w_i] = null \wedge 1 \le i \le n} {|p_i |} )
\]
where $PSet$ contains all partially-seen vertices so far, $d[v][w_i]$ is the $i$-th dimension of the $v$'s vector corresponding to keyword $w_i$ and $|p_i|$ corresponds to the current queue head $(v,p_i,|p_i|)$ in queue $PQ_i$.
\end{definition}

\textbf{Early-stop Strategy}. In each iteration step, we check whether $\delta \le \theta_1 \wedge \delta \le \theta_2$. If the condition holds, the algorithm can stop, since any  partially-seen match or un-seen match cannot be in one of the top-k results.
\vspace{-0.1in}

\section{Experiments}\label{sec:experiments}
In this section, we evaluate our approach in three large real RDF graphs, DBLP, Yago and DBPedia.

For effectiveness study, we compare our method with a classical keyword search algorithm BANKS \cite{DBLP:BANKS} over both Yago and DBPedia. Furthermore, since each resource in DBPedia is annotated by Wikipedia documents, so we design a stronger baseline named as ``Annotated SPARQL'' for DBPedia. ``Annotated SPARQL'' is similar to the approach discussed in \cite{DBLP:CE2}. It first finds out all matches of the SPARQL query, then ranks these matches by how closely the corresponding Wikipedia documents match the keywords. Note that, except for DBpedia, most current RDF datasets do not provide such documents to annotate the resources. Hence, we only do experiments of annotated SPARQL over DBPedia. For other RDF datasets, although we can crawl some pages to annotate their entities, that is beyond the scope of this paper.

For efficiency study, because there are no existing method for SK queries, we evaluate our approach with two baselines, i.e., the \emph{exhaustive computing} and the \emph{naive backward search} and . ``Exhaustive computing'' has been introduced in Section \ref{sec:introduction}. The \emph{naive backward search} is to run the backward search algorithm until that all vertices have been fully-seen by keywords. Then, according to the results' cost, we find the top-k results.

Our experiments are conduct on a
machine with 2 Ghz Core 2 Duo processor, 16G RAM memory and running
Windows Server 2008. All experiments are implemented in Java languages. We use MySQL to store the RDF graphs and the indices.

\subsection{Datasets \& Setup}\label{sec:dataset}
We use three real-world RDF datasets, DBLP, Yago and DBPedia in our experiments. The details about the two datasets are as follows.

\textbf{DBLP}. DBLP \footnote{http://sw.de ri.org/~aharth/2004/07/dblp/} contains bibliographic information of computer science publications \cite{DBLP:dblp}. The DBLP graph contains $8,381,858$ RDF triples and $3,103,614$ vertices. We define $5$ sample SK queries for DBLP and show two of them in Table \ref{table:samplequerydblp} for case study.

\textbf{Yago}. Yago\footnote{http: //www.mpi-inf.mpg.de/yago-naga/yago/} extracts facts from Wikipedia and integrates them with the WordNet
thesaurus \cite{DBLP:yago}. The RDF graph has $19,012,849$ edges and $12,811,222$ vertices. We define $8$ sample SK queries for Yago and show two of them for case study in Table \ref{table:samplequeryyago}.

\textbf{DBPedia $\&$ QALD}. DBPedia \footnote{http://downloads.dbpedia.org/3.7/en/} is an RDF dataset extracted from Wikipedia. The DBPedia contains $73,766,900$ edges and $13,100,739$ vertices. QALD \footnote{http://greententacle.techfak.uni-bielefeld.de/$\sim$cunger/qald/index.php?x =challenge$\&$q=2} is an evaluation campaign on question answering over linked data. It is co-located with the ESWC 2012. In this campaign, the committee provides some questions and each question is annotated with some recommended keywords and the answers that these queries retrieve. Note that, some questions in QALD are so simple that they can map to a SPARQL query with only one edge. These simple questions are unnecessary to be split into a SPARQL query and some keywords. Thus, we only select $10$ non-aggregation complex queries from QALD for evaluation. Two of them are as shown in Table \ref{table:samplequerydbpedia} for case study.

All sample queries are shown in Appendix.

\subsection{Effectiveness Study}\label{sec:effectiveness}
In this section, we compare our method with a classical keyword search algorithm BANKS \cite{DBLP:BANKS} over DBLP and Yago to show the effectiveness of our method. Furthermore, since each resource in DBPedia is annotated by Wikipedia documents, so we design a stronger baseline named as ``Annotated SPARQL'' for DBPedia.

\subsubsection{Case Study}

We show six sample queries in Table \ref{table:samplequerydblp}, \ref{table:samplequeryyago} and \ref{table:samplequerydbpedia} for the case study.

\begin{table}[h]
\small
\centering
  \begin{tabular}{|c|c|c|c|}
  \hline
   &   & \multicolumn{2}{c|}{SK Query} \\
  \cline{3-4}
   &  Query Sematic& SPAQRL& Keywords	\\
  \cline{1-4}
 Q3 & \tabincell{p{3cm}}{Which researchers on keyword search published papers in VLDB 2004 and DEXA 2005?}	& \tabincell{l}{Select ?person where $\{$ \\?paper year 2004;\\?paper booktitle VLDB;\\?paper1 year 2005;\\?paper1 booktitle DEXA;\\?paper1 creator ?person;\\?paper dc:creator ?person;$\}$}& \tabincell{p{1cm}}{keyword search}\\
  \cline{1-4}
  Q4& \tabincell{p{3cm}}{Which papers in KDD 2005 about concept-drifting are written by Jiawei Han?} 	& \tabincell{l}{Select ?paper where $\{$ \\?paper year 2003;\\paper booktitle KDD;\\?paper creator ?person;\\?person name Jiawei Han;$\}$} & \tabincell{p{1cm}}{concept-drifting} \\
  \hline
  \end{tabular}
  \caption{ Sample DBLP Queries for Case Study}
  \label{table:samplequerydblp}
\end{table}

\begin{table}[h]
\small
\centering

  \begin{tabular}{|c|c|c|c|}
  \hline
   &   & \multicolumn{2}{c|}{SK Query} \\
  \cline{3-4}
   &  Query Sematic& SPAQRL& Keywords	\\
  \cline{1-4}
  $Q_1$& \tabincell{p{3cm}}{Which actors/actresses played in Philadelphia are mostly related to Academy Award and Golden Globe Award?}	& \tabincell{l}{Select ?p where$\{$\\ ?p type actor;\\?p actedIn ?f;\\?f label ``Philadelphia''; $\}$}& \tabincell{p{1cm}}{Academy Award, \\Golden Globe Award} \\
  \cline{1-4}
  $Q_2$ &\tabincell{p{3cm}}{Which Turing Award winners in the field of database are mostly related to Toronto?} & \tabincell{l}{Select ?p where $\{$ \\?p type scientist;\\?p hasWonPrize ?a;\\?a label ``Turing Award'';$\}$} & \tabincell{p{1cm}}{Toronto, \\database}	\\
  \hline
  \end{tabular}
  \caption{ Sample Yago Queries for Case Study}
  \label{table:samplequeryyago}
\vspace{-0.2in}
\end{table}

\begin{table}[h]
\small
\centering
  \begin{tabular}{|c|c|c|c|}
  \hline
   &   & \multicolumn{2}{c|}{SK Query} \\
  \cline{3-4}
   &  Query Sematic& SPAQRL& Keywords	\\
  \cline{1-4}
   $Q_3$ & \tabincell{p{3cm}}{Which states of Germany are governed by the Social Democratic Party?} & \tabincell{l}{Select ?s where $\{$ \\?s country ?g;\\?g name ``Germany'';$\}$}	&  \tabincell{p{1.8cm}}{Social Democratic Party} \\
  \cline{1-4}
   $Q_4$ & \tabincell{p{3cm}}{Which monarchs of the United Kingdom were married to a German?}	& \tabincell{l}{Select ?u where $\{$ \\?u spouse ?s;\\?s birthPlace ?c;\\?c name ``Germany'';$\}$} & \tabincell{p{1.8cm}}{United Kingdom,\\ monarch} \\
  \hline
  \end{tabular}
  \caption{ Sample QALD Queries over DBPedia for Case Study}
  \vspace{-0.2in}
    \label{table:samplequerydbpedia}
\end{table}

\textbf{DBLP.} Let us consider the two sample queries in DBLP. The top-3 results answered by the SK query and BANKS over DBLP are shown in Table \ref{table:casestudyresdblp}.

$Q_3$:\emph{ Which researchers on keyword search published papers in VLDB 2004 and DEXA 2005?}

The three results returned in SK query are three researchers named ``Kesheng Wu'', ``Jeffrey Xu Yu'' and ``Maurice van Keulen''. All of them published paper in both VLDB 2004 and DEXA 2005. As well, they wrote papers about keyword search before. However, the first three results returned by traditional keyword search are ``Katsumi Tanaka'', ``Mong-Li Lee'' and ``Reda Alhajj''. Although all of them are interested in keyword search, none of them published paper in VLDB 2004.

$Q_4$:\emph{ Which papers in KDD 2005 about concept-drifting are written by Jiawei Han?}

The first result returned in SK query is a paper named``Mining concept-drifting data streams using ensemble classifiers''. This paper was written by Jiawei Han and published in KDD 2005. This is a paper closely related to concept-drifting. This is the best answer to query $Q_2$. The other two results of SK queries are still two paper written by Jiawei Han and published in KDD 2005. In contrast, the first two results returned by traditional keyword search are two papers about concept-drifting, but none of them was published in KDD 2005 or written by Jiawei Han. The third result of traditional keyword search is a researcher, which is more unrelated to the query.

\begin{table}
\small
\centering
  \begin{tabular}{|c|c|c|}
  \hline
  & \tabincell{c}{Top-3 Results of SK Query}  & \tabincell{c}{Top-3 Results of BANKS } \\
  \hline
   & \tabincell{c}{Kesheng Wu}	& \tabincell{l}{Katsumi Tanaka}  \\
  \cline{2-3}
   Q3 & \tabincell{c}{Jeffrey Xu Yu}  & \tabincell{l}{Mong-Li Lee}  \\
  \cline{2-3}
   & \tabincell{c}{Maurice van Keulen}	& \tabincell{l}{Reda Alhajj} \\
  \cline{1-3}
   & \tabincell{c}{Mining concept-drifting data \\streams using ensemble classifiers}	& \tabincell{l}{On Reducing Classifier Granularity \\in Mining Concept-Drifting \\Data Streams.}   \\
  \cline{2-3}
   Q4 & \tabincell{c}{CLOSET+: searching for \\the best strategies for mining  \\frequent closed itemsets.}  & \tabincell{l}{ACE: Adaptive Classifiers-Ensemble \\System for Concept-Drifting \\Environments.}  \\
  \cline{2-3}
   & \tabincell{c}{CloseGraph: mining closed \\frequent graph patterns.}& \tabincell{l}{Baile Shi} \\
  \hline
  \end{tabular}
  \caption{Effectiveness Results for Sample DBLP Queries}
  \label{table:casestudyresdblp}
\end{table}

\textbf{Yago.} Let us consider the two sample queries in Yago. The top-3 results answered by the SK query and BANKS over Yago are shown in Table \ref{table:casestudyresyago}.

$Q_1$:\emph{ Which actors/actresses played in Philadelphia are mostly related to Academy Award and Golden Globe Award?}

We have analyzed query $Q_1$ in Section \ref{sec:introduction}. For comparison, we use keywords $\{actors,$ $actresses,$ $Philadelphia,$ $Academy$ $Award,$ $ Golden$ $Globe$ $Award\}$ for keyword search. Generally, SK query returns more reasonable answers than the traditional keyword search. In contrast, the first two results returned traditional keyword search are ``Grace Kelly'' and ``George Cukor''. Grace Kelly lived in Philadelphia, and George Cukor is also an actor that directed the film, \emph{The Philadelphia Story}, in 1940.

$Q_2$:\emph{ Which Turing Award winners are mostly related to Toronto?}

The first result returned in SK query is ``Stephen Cook''. As we know, Stephen Cook is a professor in University of Toronto. He won the Turing award for his contributions to complexity theory. This is the best answer to query $Q_2$. The second answer is ``William Kahan''. Prof. William Kahan was born in Toronto and won the Turing award for his contributions to the numerical analysis algorithm. The third one is ``Kenneth E. Iverson''. Prof. Kenneth E. Iverson also received the Turing Award. He was died in Toronto.

In contrast, the first two results returned by traditional keyword search are ``English Language'' and ``Princeton University''. Obviously, they are non-informative results. Here, keywords for keyword search that we use are $\{Turing$ $Award,$ $winners,$ $Toronto\}$.

\vspace{-0.1in}
\begin{table}[h]
\small
\centering
  \begin{tabular}{|c|c|c|}
  \hline
  & \tabincell{c}{Top-3 Results of SK Query}  & \tabincell{c}{Top-3 Results of BANKS } \\
    \hline
    \hline
  & \tabincell{c}{Denzel Washington}	& \tabincell{c}{Grace Kelly  }\\
  \cline{2-3}
  $Q_1$ & \tabincell{c}{Joanne Woodward} & \tabincell{c}{George Cukor}	\\
  \cline{2-3}
  & \tabincell{c}{Antonio Banderas	}&\tabincell{c}{ Joanne Woodward }\\
  \cline{1-3}
  & \tabincell{c}{Stephen Cook} 	& \tabincell{c}{English language }\\
  \cline{2-3}
   $Q_2$ & \tabincell{c}{William Kahan }&\tabincell{c}{ Princeton University	 }\\
  \cline{2-3}
   & \tabincell{c}{Kenneth E. Iverson	}&\tabincell{c}{ Turing Award  }\\
  \hline
  \end{tabular}
  \vspace{-0.2in}
  \caption{Effectiveness Results for Sample Yago Queries}

  \label{table:casestudyresyago}
\end{table}
\vspace{-0.1in}

\textbf{DBPedia $\&$ QALD.} Let us consider the two sample QALD queries over DBPedia. The top-3 results answered by BANKS, the SK query and ``Annotated SPARQL'' over DBPedia are shown in Table \ref{table:casestudyresdbpedia}.

\begin{table}[H]
\small
\centering
\begin{threeparttable}
  \begin{tabular}{|p{0.07in}|c|c|c|}
  \hline
  & \tabincell{c}{Top-3 Results of\\ SK Query}  & \tabincell{c}{Top-3 Results of\\ BANKS } & \tabincell{c}{Top-3 Results of\\ Annotated SPARQL} \\
  \hline
    \hline
   & Hanau	& Australia  & Hans-Ulrich Rudel \\
  \cline{2-4}
   $Q_3$ & Hanhofen  & Bombardier Transportation & Hans Dauser\\
  \cline{2-4}
   & Hanover	& Canada& Hans Heidtmann\\
   \cline{1-4}
   & \tabincell{c}{William IV of\\ the United Kingdom}	& \tabincell{c}{2004 Amsterdam\\ Admirals season } & \tabincell{c}{William IV of\\ the United Kingdom}\\
  \cline{2-4}
   $Q_4$ & \tabincell{c}{Carl XVI Gustaf of\\ Sweden}  & \tabincell{c}{2004 Berlin\\ Thunder season} & \tabincell{c}{Beatrix of\\ the Netherlands}\\
  \cline{2-4}
   & \tabincell{c}{Beatrix of\\ the Netherlands}	& \tabincell{c}{2004 Cologne\\ Centurions season}& Switzerland\\
  \hline
  \end{tabular}
  \end{threeparttable}
\vspace{-0.2in}
      \caption{Effectiveness Results over DBPedia for Sample QALD Queries}
  \label{table:casestudyresdbpedia}
\end{table}

$Q_3$:\emph{ Which states of Germany are governed by the Social Democratic Party?}

The first three results returned in SK query are three places in Germany and governed by the Social Democratic Party. However, the first three results returned in annotated SPARQL are three members of the Social Democratic Party in Germany. The first three results returned by traditional keyword search are three place far from Germany. Hence, the results of SK queries are more informative than the other two methods. Here, keywords for keyword search are $\{state,$ $Germany,$ $ govern,$ $Social Democratic Party\}$, which are given in QALD.

$Q_4$:\emph{ Which monarchs of the United Kingdom were married to a German?}

The first result of both SK query and annotated SPARQL are William IV of the United Kingdom, which is the best answer. The other two results of SK query are still two royals in Europe. However, the third result of annotated SPARQL is a European nation. In addition, the first three results returned by traditional keyword search are non-informative results. Here, keywords for keyword search are $\{United$ $Kingdom,$ $monarch,$ $married,$ $German\}$, which are also given in QALD.

\subsubsection{NDCG@k over Yago and DBLP}
In order to quantify the effectiveness of SK query, we evaluate the NDCG (Normalized Discounted Cumulative Gain \cite{DBLP:NDCG}) of both SK query and the keyword search. Since there are no golden standards, we invite 10 volunteers to judge the result quality. Specifically, we ask each volunteer to rate the goodness of the results returned by SK query and the keyword search method. The score is between $1$ and $5$. Higher the score, better the result.

Table \ref{table:NDCG} reports NDCG@k values by varying $k$ from 3 to 10 in both Yago and DBLP. SK query outperforms the traditional keyword search by $20\%$-$50\%$. Furthermore, we find that the gap in Yago is larger than that in DBLP. The reason is that Yago has more complex schema than DBLP.
Thus, keywords may result in more ambiguity in Yago than in DBLP. It means that the superiority of SK query is more pronounced in semantic-rich data.

\begin{table}
\small
\centering
  \begin{tabular}{|c|c|c|c|c|}
  \hline
  & & NDCG@3 & NDCG@5 & NDCG@10\\
  \hline
  Yago & BANKS &0.3455	& 0.39	& 0.4643\\
  \cline{2-5}
  & SK query &0.815 &	0.868	& 0.872 \\
  \hline
  \hline
  DBLP & BANKS &0.7143	&0.684	&0.685\\
  \cline{2-5}
  & SK query &0.93	&0.8867	&0.8738\\
  \hline
  \end{tabular}
 \caption{Average NDCG Values}
 \vspace{-0.2in}
 \label{table:NDCG}
\end{table}

\subsubsection{MAP over DBPedia}

Since QALD provides the standard answers of each queries, we evaluate the MAP (Mean Average Precision \cite{DBLP:conf/sigir/TurpinS06})  to compare the SK query with BANKS and ``Annotated SPARQL''.

\begin{table}[h]
\small
\centering
  \begin{tabular}{|c|c|c|c|}
  \hline
 & BANKS & Annotated SPARQL &SK query\\
  \hline
 MAP  &0.012	& 0.192	 &0.205 \\
  \hline
  \end{tabular}
 \caption{MAP Value over DBPedia $\&$ QALD}
 \label{table:MAP}
 \vspace{-0.2in}
\end{table}

Table \ref{table:MAP} reports MAP values of our ten QALD queries. Both SK query and annotated SPARQL outperforms the traditional keyword search by a order of magnitude. The MAP value of the ``annotated SPARQL'' is smaller than the SK query. This is because that the ``annotated SPARQL'' can do well when the documents associated with the matches contains the keywords. In other words, the ``annotated SPARQL'' can work, only when the relation between the matches and the keywords is explicit. However, in pracitce, the relation between the matches and the keywords is often implicit. Then, the SK query do better.

\begin{table*}
\centering
\begin{tabular}{|c||c|c|c|c|c|}
\nop{
\hline
\multicolumn{6}{|c|}{\textbf{The Number of Graph Matching Operations}}\\
}
\hline
  &$Q_1$&	$Q_2$&	$Q_3$&	$Q_4$&	$Q_5$\\
  \hline
  Naive Backward Search&292268 	&21885	&254674	&872426	&2747\\
  \cline{1-6}
 Our Approach&354 	&5	&1684	&669	&1548\\
  \hline
  \end{tabular}
\caption{The Number of Graph Matching Operations on DBLP}
\vspace{-0.2in}
\label{table:sparqlcndblp}
\end{table*}

\begin{table*}
\centering
\begin{tabular}{|c||c|c|c|c|c|c|c|c|}
\nop{
\hline
\multicolumn{9}{|c|}{\textbf{The Number of Graph Matching Operations}}\\
}
\hline
  &$Q_1$&	$Q_2$&	$Q_3$&	$Q_4$&	$Q_5$&	$Q_6$&	$Q_7$& $Q_8$\\
  \hline
  Naive Backward Search&600283	&563736	&301&	167958	&231210	&271929	&94012	&254848	\\
  \cline{1-9}
  Our Approach&6	&55	&269&	5414	&32	&9	&292	&15	\\
  \hline
  \end{tabular}
  \vspace{-0.25in}
\caption{The Number of Graph Matching Operations on Yago}
\label{table:sparqlcnyago}
\end{table*}

\begin{table*}
\vspace{-0.1in}
\centering
\begin{tabular}{|c||c|c|c|c|c|c|c|c|c|c|}
\nop{
\hline
\multicolumn{11}{|c|}{\textbf{The Number of Graph Matching Operations}}\\
}
\hline
  &$Q_1$&	$Q_2$&	$Q_3$&	$Q_4$&	$Q_5$&$Q_6$&	$Q_7$&	$Q_8$&	$Q_9$&	$Q_{10}$\\
  \hline
  Naive Backward Search&	31083&	136777&19904	&	847&  19454&	40302&5076	&23422	& 16079&2786	 \\
  \cline{1-11}
  Our Approach	&13824	&	3941&	4769  &847	&  40 &	89&18	&	23422&	16079&  1 \\
  \hline
  \end{tabular}
  \vspace{-0.25in}
\caption{The Number of Graph Matching Operations on DBPedia}
\label{table:sparqlcnDBPedia}
\end{table*}

\subsection{Efficiency Study}\label{sec:efficiency}
In this section, we evaluate the efficiency of SK query in large real graphs. Here, the default number of returned results is set to be $10$.

\subsubsection{Offline Performance}
We report the index size and index construction time in Table \ref{table:index}. Since our structural index is based on the efficient sequential pattern mining, we can finish the structural index construction in several minutes.

\vspace{-0.05in}
\begin{table}[h]
\small
\centering
  \begin{tabular}{|c|c|c|}
  \hline
  &Index Construction Time(s) & Index Size(MB)	\\
  \hline
  DBLP & 77.885 & 377.977	 \\
  \hline
  Yago & 176.67 &	844.066	\\
  \hline
  DBPedia & 600.263 & 283.559	 \\
  \hline
  \end{tabular}
 \caption{Index Size and Index Construction time}\label{table:index}
 \vspace{-0.2in}
\end{table}
\vspace{-0.2in}

\subsubsection{Pruning Effect of Structural Index}
Based on the indices introduced in Section \ref{sec:index}, we can avoid many times to call Algorithm \ref{alg:sparqlmatchingnew} for graph matching by pruning many unsatisfied vertices. Moreover, the vertices which are too far to be in a final answer can be safely pruned. In this experiment, we report the pruning efficiency of our structural index. We make a comparison of the number of graph matching operations that the advanced backward search accessed and the number of graph matching operations that the naive backward search accessed.

Tables \ref{table:sparqlcndblp}, \ref{table:sparqlcnyago} and \ref{table:sparqlcnDBPedia} show the number of graph matching operations on DBLP, Yago and DBPedia. The number of graph matching operations in advanced backward search is not less than the basic backward search. In most case, we avoid a large number number of graph matching operations.

\vspace{-0.05in}
\subsubsection{Online Performance}
In this section, we evaluate the efficiency of our method. Figure \ref{fig:efficiency} shows the time cost of the three methods.

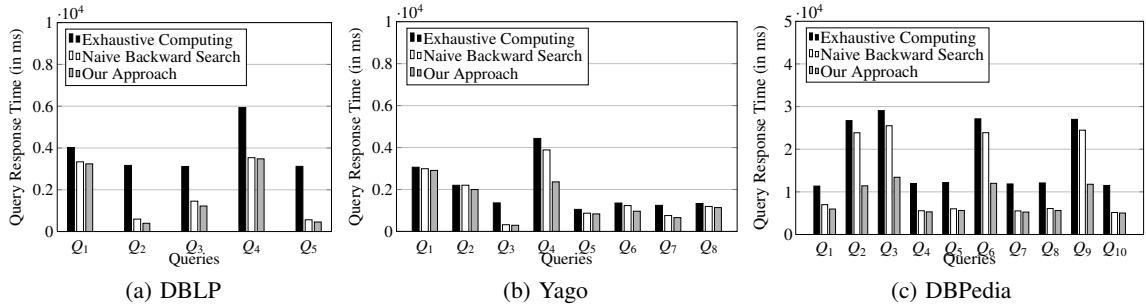
\begin{figure*}%
\vspace{-0.2in}
\subfigure[DBLP]{%
		\resizebox{0.25\columnwidth}{!}{
				\begin{tikzpicture}[font=\Large]
 		 \begin{axis}[
    			major x tick style = transparent,
    			ybar,
    width = 10cm,
               height = 8cm,
    ymin = 1,
    ymax=10000,
   			ymajorgrids = true,
   			ylabel = {Query Response Time (in ms)},
    			xlabel = {Queries},
    			symbolic x coords = {$Q_1$,$Q_2$,$Q_3$,$Q_4$,$Q_5$},
    			scaled y ticks = true,
			bar width=6pt,
			legend pos= north west,
 legend cell align=left
   		]

    \addplot [fill=black] coordinates {($Q_1$, 4022) ($Q_2$, 3165.75) ($Q_3$, 3113.25) ($Q_4$, 5934.5) ($Q_5$, 3120.75)};

\addplot [fill=white] coordinates {($Q_1$, 3338) ($Q_2$, 598) ($Q_3$, 1455) ($Q_4$, 3536) ($Q_5$, 564)};

    		\addplot [fill=black!30] coordinates {($Q_1$, 3240)  ($Q_2$, 397.25) ($Q_3$, 1223) ($Q_4$, 3481) ($Q_5$, 458.25)};

   		 \legend{Exhaustive Computing,Naive Backward Search,Our Approach}
  		\end{axis}
\end{tikzpicture}
		}
       \label{fig:efficiency_dbpedia}%
       }%
   \subfigure[Yago]{%
		\resizebox{0.3\columnwidth}{!}{
				\begin{tikzpicture}[font=\Large]
 		 \begin{axis}[
    			major x tick style = transparent,
    			ybar,
    width = 12cm,
               height = 8cm,
    ymin = 1,
    ymax=10000,
   			ymajorgrids = true,
   			ylabel = {Query Response Time (in ms)},
    			xlabel = {Queries},
    			symbolic x coords = {$Q_1$,$Q_2$,$Q_3$,$Q_4$,$Q_5$,$Q_6$,$Q_7$,$Q_8$},
    			scaled y ticks = true,
			bar width=6pt,
			legend pos= north west,
 legend cell align=left
   		]

    \addplot [fill=black] coordinates {($Q_1$, 3061) ($Q_2$, 2200.75) ($Q_3$, 1360.5) ($Q_4$, 4433.5) ($Q_5$, 1051.25) ($Q_6$, 1353.25) ($Q_7$, 1243.25) ($Q_8$, 1331	)};

\addplot [fill=white] coordinates {($Q_1$, 2990.25) ($Q_2$, 2203.25	) ($Q_3$, 320	) ($Q_4$, 3886) ($Q_5$, 869.5) ($Q_6$, 1233.5) ($Q_7$, 757) ($Q_8$, 1191.5)};

    		\addplot [fill=black!30] coordinates {($Q_1$, 2908.5) ($Q_2$, 1996) ($Q_3$, 293.5) ($Q_4$, 2362.25) ($Q_5$, 834.75) ($Q_6$, 963.25) ($Q_7$, 658.5) ($Q_8$, 1136.75) };

   		 \legend{Exhaustive Computing,Naive Backward Search,Our Approach}
  		\end{axis}
\end{tikzpicture}
		}
       \label{fig:efficiency_yago}%
       }%
\subfigure[DBPedia]{%
		\resizebox{0.3\columnwidth}{!}{
				\begin{tikzpicture}[font=\LARGE]
 		 \begin{axis}[
    			major x tick style = transparent,
    			ybar,
    width = 14cm,
               height = 9.2cm,
    ymin = 1,
    ymax=50000,
   			ymajorgrids = true,
   			ylabel = {Query Response Time (in ms)},
    			xlabel = {Queries},
    			symbolic x coords = {$Q_1$,$Q_2$,$Q_3$,$Q_4$,$Q_5$,$Q_6$,$Q_7$,$Q_8$,$Q_9$,$Q_{10}$},
    			scaled y ticks = true,
			bar width=6pt,
			legend pos= north west,
 legend cell align=left
   		]

    \addplot [fill=black] coordinates {($Q_1$, 11350	) ($Q_2$, 26705   ) ($Q_3$,29044   ) ($Q_4$, 11968	) ($Q_5$, 12184	) ($Q_6$, 27122   ) ($Q_7$, 11853) ($Q_8$, 12089) ($Q_9$, 27005) ($Q_{10}$, 11504)};

\addplot [fill=white] coordinates {($Q_1$, 7000) ($Q_2$, 23863) ($Q_3$, 25504) ($Q_4$, 5571) ($Q_5$, 6027) ($Q_6$, 23874) ($Q_7$, 5535) ($Q_8$, 6087)($Q_9$, 24461)($Q_{10}$, 5164)};

    		\addplot [fill=black!30] coordinates {($Q_1$, 5972) ($Q_2$, 11419) ($Q_3$, 13427) ($Q_4$, 5310) ($Q_5$, 5626) ($Q_6$, 12002) ($Q_7$, 5260) ($Q_8$, 5618)($Q_9$, 11784)($Q_{10}$, 5041) };

   		 \legend{Exhaustive Computing,Naive Backward Search,Our Approach}
  		\end{axis}
\end{tikzpicture}
		}
       \label{fig:efficiency_dbpedia}%
       }%
       \vspace{-0.3in}
 \caption{Online Performance}%
 \label{fig:efficiency}
\end{figure*}

As shown in Figure \ref{fig:efficiency}, our method outperforms the baseline method by 2 or more times in most case. Especially for $Q_3$ on Yago and $Q_2$, $Q_5$ on DBLP, our method only takes a fifth of the exhaustive-computing. This is because that the matches of these SPARQLs are close to vertices containing keywords. Thus, the query processing can terminate soon.

Note that, because our inverted index for keywords are stored in disk, keywords mapping will cost much time and takes up a large part of the total time. Hence, it is difficult for our method to improve the efficiency too much.

\section{Related Work}\label{sec:related}
For SPARQL query, there have been many works to study it, such as \cite{DBLP:Hexastore,DBLP:swstore,DBLP:rdf3x,DBLP:xrdf3x,DBLP:gStore,DBLP:conf/sigmod/BorneaDKSDUB13,DBLP:journals/pvldb/ZengYWSW13}. Some of them \cite{DBLP:swstore,DBLP:conf/sigmod/BorneaDKSDUB13} store the RDF triples into RDBMS and answer the SPARQL via join operations. RDF-3x \cite{DBLP:rdf3x,DBLP:xrdf3x} and Hexastore \cite{DBLP:Hexastore} create indexes for each permutation of subject, predicate and object. Since an RDF dataset can also be modeled as a graph, Trinity.RDF \cite{DBLP:journals/pvldb/ZengYWSW13} and gStore \cite{DBLP:gStore} deem answer the SPARQL in an RDF dataset as finding the subgraph matches over an RDF graph. Trinity.RDF and gStore design a subgraph match algorithm similar to VF2 \cite{DBLP:VF2} to answer the SPARQL query. VF2 \cite{DBLP:VF2} is an early efforts for subgraph isomorphism check. VF2 starts with a vertex and explore to a vertex connected from the already matched query vertices one by one.

For keyword search, existing keyword search techniques over RDF graphs can be classified into the following two categories. The first kind of methods \cite{Q2Semantic:ESWC2008,ICDE09:keyword,keyword:ISWC2011,CoSi:www2011} interpret keywords as SPARQL queries, and then retrieve results by involving existing SPARQL query engines. Another kind of methods aim to find the small-size substructures (in RDF graphs) that contains all keywords. The top-k substructures, such as trees \cite{DBLP:BANKS,DBLP:Bidirectional,DBLP:BLINKS,DBLP:DPBF,DBLP:star,DBLP:journals/tkde/LeLKD14,DBLP:conf/adma/WangZPZ12}, cliques \cite{DBLP:rcliques}, computed on the basis of a scoring function are returned to users.

There are also many approaches to mine some frequent patterns to build indices in graph database \cite{DBLP:gIndex,DBLP:gSpan,DBLP:gaddi}. Among these works, gIndex \cite{DBLP:gIndex} and gSpan \cite{DBLP:gSpan} can be applied to small graphs in a database of multiple graphs, but not support mining patterns in a single graph. GADDI \cite{DBLP:gaddi} tries to finding all
the matches of a query graph in a given large graph, but it can only
support a graph with thousands of vertices while recent RDF data graph
may have hundred thousands of entities.

To the best of our knowledge, although there exist a few previous works \cite{DEBU:sparqlandkeyword,DBLP:CE2} for the hybrid query combined SPARQL
and keyords, there has been no existing work on SK query defined as the above. Elbassuoni et al. \cite{DEBU:sparqlandkeyword} assumes that each RDF triple may have associated text passages. Then, Elbassuoni et al. extend the triple patterns in SPARQL with keyword conditions. Moreover, $CE^2$ \cite{DBLP:CE2} assumes that each resource associate with a document. Then, $CE^2$ extend the variables in SPARQL with keyword conditions. Nonetheless, most current RDF datasets do not provide neither text passages to annotate triples nor documents to annotate resources. In summary, both of these methods cannot handle our example queries. Also, the SK query that we define can apply to most existing RDF datasets.

As well, in \cite{SIGMOD:structuralkeyword}, the authors define a new query language that blends keyword search with structured query processing. \cite{SIGIR:ObjectRetrieval} utilizes some given kinds of SPARQL to improve the result of object retrieval. Moreover, \cite{ESWC08:HS,OTM:GoNTogle} try to extend keyword search with semantics. Zou et al. \cite{DBLP:conf/sigmod/ZouHWYHZ14} translate natural language questions into SPARQL queries.

\section{Conclusions}\label{sec:conclusions}
In this paper, we have proposed a new kind of query (SK query) that integrates SPARQL and keywords. To handle this kind of query, we firstly introduce a basic method based on the backward search. However, this basic solution faces several performance issues. Hence, we build up a structural index. Our structural index is based on frequent star pattern in RDF data. By using the indices, we propose an advanced strategy to deal with SK queries. Finally, with three real RDF datasets, we demonstrate that the our method can outperform the baseline both in effectiveness and efficiency.







\small
\bibliographystyle{abbrv}
\bibliography{sigproc}

\normalsize






\appendix

\section{Queries in Experiments}
\label{sec:allresults}

Table \ref{table:allsamplequeries} shows all of our sample queries over Yago and DBLP. Here, since our institution is in China, most volunteers that we invite are Chinese. Hence, some sample queries are about China.

For more reasonable experiments, so we also sample $10$ non-aggregation QALD queries over DBPedia to evaluate our method. All QALD queries over DBPedia are shown in Table \ref{table:allsamplequeriesdbpedia}.

\begin{table*}
\small
\centering

  \begin{tabular}{|c||c|c|c|c|}
  \hline
  & &   & \multicolumn{2}{c|}{SK Query} \\
  \cline{4-5}
   &   &  Query Sematic& SPAQRL& Keywords	\\
  \cline{1-5}
  &  Q1& \tabincell{p{7cm}}{Which researchers on keyword search published papers in VLDB 2004 and DEXA 2005?}	& \tabincell{p{5cm}}{Select ?person where $\{$ \\?paper year 2004\\?paper booktitle VLDB\\?paper1 year 2005\\?paper1 booktitle DEXA\\?paper1 creator ?person\\?paper creator ?person$\}$}& \tabincell{p{3.5cm}}{keyword search}\\
  \cline{2-5}
  &  Q2& \tabincell{p{7cm}}{Which papers in KDD 2005 about concept-drifting are written by Jiawei Han?} 	& \tabincell{p{5cm}}{Select ?paper where $\{$ \\?paper year 2003\\paper booktitle KDD\\?paper creator ?person\\?person name Jiawei Han$\}$} & \tabincell{p{3.5cm}}{concept-drifting} \\
  \cline{2-5}
  \textbf{DBLP}&  Q3& \tabincell{p{7cm}}{Who wrote a paper in ICDM 2005 with others and knew Tamer?} 	& \tabincell{p{5cm}}{Select ?person1 where $\{$ \\?paper year 2005;\\?paper booktitle ``ICDM'';\\?paper creator ?person1;\\?paper dc:creator ?person2$\}$} & \tabincell{p{3.5cm}}{Tamer} \\
  \cline{2-5}
  &  Q4& \tabincell{p{7cm}}{Who wrote a paper VLDB 2005 and kept a good relationship to Jian Pei and Wen Jin?} 	& \tabincell{p{5cm}}{Select ?person2 where $\{$ \\?paper year ``2005'';\\?paper booktitle ``VLDB'';\\?paper creator ?person2;$\}$} & \tabincell{p{3.5cm}}{Jian Pei,\\ Wen Jin} \\
  \cline{2-5}
  &  Q5& \tabincell{p{7cm}}{Which two researchers did research about skyline and coauthored a paper in VLDB 2005?} 	& \tabincell{p{5cm}}{Select ?person1, ?person2 where $\{$ \\?paper year ``2005'';\\?paper booktitle ``VLDB'';\\?paper dc:creator ?person1;\\?paper dc:creator ?person2$\}$} & \tabincell{p{3.5cm}}{Skyline} \\
  \hline
  \hline
  &   Q1& \tabincell{p{7cm}}{Which actresses played in Philadelphia are mostly related to Academy Award and Golden Globe Award?}	& \tabincell{p{5cm}}{Select ?p where$\{$\\ ?p type actor;\\?p actedIn ?f;\\?f label ``Philadelphia''; $\}$}& \tabincell{p{3.5cm}}{Academy Award, \\Golden Globe Award} \\
  \cline{2-5}
  &   Q2 &\tabincell{p{7cm}}{Which Turing Award winners in the field of database are mostly related to Toronto?} & \tabincell{p{5cm}}{Select ?p where $\{$ \\?p type scientist;\\?p hasWonPrize ?a;\\?a label ``Turing Award'';$\}$} & \tabincell{p{3.5cm}}{Toronto, \\database}	\\
  \cline{2-5}
  &   Q3 &\tabincell{p{7cm}}{Which Microsoft¡¯s products are about SDK?} & \tabincell{p{5cm}}{Select ?c where $\{$ \\?c type company;\\?c label ``Microosft'';\\?c created ?s;$\}$} & \tabincell{p{3.5cm}}{SDK}	\\
  \cline{2-5}
  &   Q4 &\tabincell{p{7cm}}{Which English film producers did act in a Comedy film and relate to Peking University?} & \tabincell{p{5cm}}{Select ?p where $\{$ \\?p actedIn ?f1;\\?p y:created ?f2;\\?f1 type ComedyFilms;\\?f2 y:hasProductionLanguage English;$\}$} & \tabincell{p{3.5cm}}{Peking University }	\\
  \cline{2-5}
 \textbf{Yago} &   Q5 &\tabincell{p{7cm}}{Which top members of Communist Party of China are related Kissinger?} & \tabincell{p{5cm}}{Select ?p where $\{$ \\?p isAffiliatedTo ?u;\\?u label ``Communist Party of China'';\\?p type Politician;$\}$} & \tabincell{p{3.5cm}}{Kissinger}	\\
  \cline{2-5}
  &   Q6 &\tabincell{p{7cm}}{Whose father was United States Army generals and took part in Normandy Invasion?} & \tabincell{p{5cm}}{Select ?p1 where $\{$ \\?p hasChild ?p1;\\ ?p type UnitedStatesArmyGenerals ;$\}$} & \tabincell{p{3.5cm}}{Normandy Invasion}	\\
  \cline{2-5}
  &  Q7 &\tabincell{p{7cm}}{Which state generated a Los Angeles Lakers player that relate to Eagle, Colorado?} & \tabincell{p{5cm}}{Select ?p where $\{$ \\?p bornIn ?c;\\ ?c locatedIn ?s;\\?p type LosAngelesLakersPlayers;$\}$} & \tabincell{p{3.5cm}}{Eagle Colorado}	\\
  \cline{2-5}
  &  Q8 &\tabincell{p{7cm}}{Which participants of People's National Congress did graduate from universities in Beijing?} & \tabincell{p{5cm}}{Select ?p where $\{$ \\?p graduatedFrom ?u;\\?u type UniversitiesInBeijing;$\}$} & \tabincell{p{3.5cm}}{ People's National Congress}	\\

\hline
  \end{tabular}
  \caption{Sample Queries over Yago and DBLP}
  \label{table:allsamplequeries}
\end{table*}

\begin{table*}
\small
\centering
  \begin{tabular}{|c||c|c|c|c|}
  \hline
  & &   & \multicolumn{2}{c|}{SK Query} \\
  \cline{4-5}
  & &  Query Sematic& SPAQRL& Keywords	\\
  \cline{1-5}
  & Q1 & \tabincell{p{6.5cm}}{Which states of Germany are governed by the Social Democratic Party?} & \tabincell{p{5cm}}{Select ?s where $\{$ \\?s country ?g;\\?g name ``Germany'';$\}$}	&  \tabincell{p{3.5cm}}{Social Democratic Party} \\
  \cline{2-5}
  & Q2 & \tabincell{p{6.5cm}}{Which monarchs of the United Kingdom were married to a German?}	& \tabincell{p{5cm}}{Select ?u where $\{$ \\?u spouse ?s;\\?s birthPlace ?c;\\?c name ``Germany'';$\}$} & \tabincell{p{3.5cm}}{United Kingdom,\\ monarch} \\
  \cline{2-5}
  & Q3 & \tabincell{p{6.5cm}}{Which capitals in Europe were host cities of the summer olympic games?}	& \tabincell{p{5cm}}{Select ?u where $\{$ \\?s type Country;\\?s capital ?u;$\}$} & \tabincell{p{3.5cm}}{Olympic games,\\Europe} \\
  \cline{2-5}
  & Q4 & \tabincell{p{6.5cm}}{Who produced films starring Natalie Portman?}	& \tabincell{p{5cm}}{Select ?p where $\{$ \\?f type Film;\\?f producer ?p;$\}$} & \tabincell{p{3.5cm}}{Natalie Portman} \\
  \cline{2-5}
 \tabincell{p{1.2cm}}{\textbf{DBPedia $\&$QALD}} & Q5 & \tabincell{p{6.5cm}}{In which films did Julia Roberts as well as Richard Gere play?}	& \tabincell{p{5cm}}{Select ?f where $\{$ \\?f	type Film;\\?f starring ?p;\\?p name ``Roberts, Julia''$\}$} & \tabincell{p{3.5cm}}{Richard Gere} \\
  \cline{2-5}
  & Q6 & \tabincell{p{6.5cm}}{List all episodes of the first season of the HBO television series The Sopranos!}	& \tabincell{p{5cm}}{Select ?u where $\{$ \\?s	name ``The Sopranos'';\\?u series ?s;$\}$} & \tabincell{p{3.5cm}}{HBO,\\first} \\
  \cline{2-5}
  & Q7 & \tabincell{p{6.5cm}}{In which films directed by Garry Marshall was Julia Roberts starring?}	& \tabincell{p{5cm}}{Select ?f where $\{$ \\?f type Film;\\?f director ?p;\\?p name ''Marshall, Garry'';$\}$} & \tabincell{p{3.5cm}}{Julia Roberts} \\
  \cline{2-5}
  & Q8 & \tabincell{p{6.5cm}}{Which software has been developed by organizations founded in California?}	& \tabincell{p{5cm}}{Select ?u where $\{$ \\?c type Organisation;\\?u developer ?c;\\?u type Software;$\}$} & \tabincell{p{3.5cm}}{California} \\
  \cline{2-5}
  & Q9 & \tabincell{p{6.5cm}}{Which U.S. states possess gold minerals?}	& \tabincell{p{5cm}}{Select ?s where $\{$ \\?s type Place;\\?s country ?g;\\?g name ``the United States'';$\}$} & \tabincell{p{3.5cm}}{gold,\\mineral} \\
  \cline{2-5}
  & Q10 & \tabincell{p{6.5cm}}{Which countries in the European Union adopted the Euro?}	& \tabincell{p{5cm}}{Select ?u where $\{$ \\?u type Country;\\?u ethnicGroup European Union;$\}$} & \tabincell{p{3.5cm}}{Euro} \\
   \hline
  \end{tabular}
  \caption{ Sample QALD Queries over DBPedia}

  \label{table:allsamplequeriesdbpedia}
\end{table*}

\end{document}